\numberwithin{equation}{section}
\newtheorem{prop}{Proposition}[section]
\newcounter{hypA}
\newenvironment{hypA}{\refstepcounter{hypA}\begin{itemize}
  \item[({\bf A\arabic{hypA}})]}{\end{itemize}}
\newcounter{hypB}
\newcounter{hypD}
\newcommand{\ov}[1]{\overline{#1}}
\newcommand{\un}[1]{\underline{#1}}
\newcommand{\eqd}{\overset{d}{=}}
\date{}
\newcommand{\bbN}{\mathbb{N}}
\newcommand{\bbP}{\mathbb{P}}
\newcommand{\R}{\mathbb{R}}
\begin{document}

\begin{center}

{\Large \textbf{Modeling of Measurement Error in Financial Returns Data}}

\vspace{0.5cm}

BY  AJAY JASRA$^{1}$, MOHAMED MAAMA$^{2}$ \& ALEKSANDAR MIJATOVI\'C$^{3}$

{\footnotesize $^{1}$School of Data Science,  The Chinese University of Hong Kong, Shenzhen, CN.
}\\
{\footnotesize $^{2}$Applied Mathematics and Computational Science Program,  Computer, Electrical and Mathematical Sciences and Engineering Division, King Abdullah University of Science and Technology, Thuwal, 23955-6900, KSA.}\\
{\footnotesize $^{3}$Department of Statistics, University of Warwick, UK.
}\\
{\footnotesize E-Mail:\,} \texttt{\emph{\footnotesize ajayjasra@cuhk.edu.cn, maama.mohamed@gmail.com,
a.mijatovic@warwick.ac.uk
}}

\end{center}

\begin{abstract}
In this paper we consider the modeling of measurement error for fund returns data.  In particular,  given access to
a time-series of discretely observed log-returns and the associated maximum over the observation period,  we develop a stochastic model which models the true log-returns and maximum via a L\'evy process and the data as a measurement error there-of.  The main technical difficulty of trying to infer this model,  for instance Bayesian parameter estimation,  is that the joint transition density of the return and maximum is seldom known, nor can it be simulated exactly.  Based upon the novel stick breaking representation of \cite{SBA} we provide an approximation of the model. We develop a Markov chain Monte Carlo (MCMC) algorithm to sample from the Bayesian posterior of the approximated posterior and then extend this to a multilevel MCMC method which can reduce the computational cost to approximate posterior expectations, relative to ordinary MCMC.  We implement our methodology on several applications including for real data.
\\
\noindent \textbf{Key words}: Funds Return Data, Bayesian Parameter Estimation,  L\'evy Processes, Multilevel Monte Carlo.
\end{abstract}

\section{Introduction}

The modeling of financial data via L\'evy processes is ubiquitous in the literature of financial econometrics; see for instance \cite{bns, cont,gander,smc_levy} for several contributions.  Less common, however, is often the joint modeling of financial returns data along with their maximum over a given time period.  For instance,  that one has the index of a fund measured over a month and the associated maximum over that month.  The latter information can be of interest in financial applications, for instance in the case of prediction for risk assessment and so fourth. 
In this article we develop a statistical model for (not-necessarily) regularly observed financial data and the associated maximum,  modeled precisely as a \emph{latent} process.  The actual data are then assumed conditionally independent of all other variables, given the current value of the latent process and treated as a measurement error of such a process. It is well-known in the financial industry, especially in the context of hedge-fund data \cite{fung1,fung2}, that the afore mentioned data is subject to many biases and errors. To mitigate this issue, we assume that, in reality the data are drawn from an underlying process and that the data are noisy-observations of said process.  We then formulate a Bayesian model which allows one to infer unknown model parameters as well as to predict the data forward in time.

In the context of using L\'evy processes for such type of data,  one of the main barriers to the application of well-known statistical inferential procedures,  such as Bayesian estimation,  is the fact that the transition density of the value of its process,  along with it's maximum is seldom available in an analytic form.  Moreover, that direct simulation from such transition densities is not possible without some type of error,  for instance,  time discretization error when using time-stepping approximation.  For the model that we develop,  we use the 
novel stick breaking representation of \cite{SBA} (see also \cite{drawdown,cazares2021convex}) which provide a means to simulate the value of the process and it's maximum subject to an arbitrarily small error which decays geometrically in an accuracy parameter.  This latter property is key in producing Bayesian estimation that is as faithful as possible to the original model that we develop.

The main contributions of this paper are as follows:
\begin{itemize}
\item{We develop a new Bayesian model for discretely observed fund data with an associated maximum. This model is based upon a latent L\'evy process.}
\item{We provide an approximation of this model using the stick-breaking representation.}
\item{We provide a Markov chain Monte Carlo (MCMC) algorithm for sampling from the approximate Bayesian posterior.}
\item{We show how to utilize the multilevel Monte Carlo (MLMC) method to improve the cost to achieve a given
mean square error versus the afore-mentioned MCMC approach.}
\end{itemize}
The final contribution is worth elaborating upon here.  As previously mentioned,  our approximate model is based upon the stick-breaking representation of L\'evy process, its maxiumum and the duration as given in \cite{SBA}.
In particular, this representation is an infinite length sum of random variables that can be simulated (if the increments of the L\'evy process can be).  Of course,  in practice,  one cannot compute an infinite sum,  but truncate it at some point and this induces the approximation.  The idea of MLMC \cite{giles} is to then adopt a hierarchy of approximations that are increasingly accurate and write the expectation associated to the most precise approximation as a telescoping sum of expectations associated to increasingly less accurate approximations.
Then if one can appropriatelty simulate couplings of the probability distributions with `consecutive' levels of approximation it is possible to reduce the cost to approximate the expectation of interest up-to a given mean square error,  relative to using the most accurate approximation by itself; see \cite{giles1} for more details.
For the case of the model in this paper,  one has access to data,  so standard MLMC methodology must be
adapted as has been done in \cite{bayes_mlmc,ml_rev,ml_cont,med}.  Indeed we show how to combine the ideas of \cite{SBA} and \cite{bayes_mlmc}
to reduce the cost to achieve a given mean square error versus using the most precise approximation and MCMC.
This latter result is shown mathematically.

This paper is structured as follows.  
Section \ref{sec:model} presents the statistical model along with the approximation via the 
stick-breaking representation.  In Section \ref{sec:inference} we show how to fit the model to data using 
MCMC and multilevel MCMC methods.  This section also gives the theoretical result associated to the computational gains that are expected.  Finally in Section \ref{sec:numerics} we present several
numerical examples,  including an application to real data.

\section{Model and Approximation}\label{sec:model}

\subsection{Model}

We consider the regular in time observation of the log-returns and the maximal such value over several time instances.  We assume data observed over unit time intervals, but this is simply for notational convenience and are denoted as $D_T := \{(Y_1,\overline{Y}_1),\dots,  (Y_T,\overline{Y}_T)\}$, where the log-returns are the $Y_n$ and the associated maximum over the time instance $(n-1,n)$ is $\overline{Y}_n$, $n\in\{1,\dots,T\}$. 

More-formally, we consider a L\'evy process $\{X_t\}_{t\in[0,T]}$ that models the log-returns; such processes can capture a variety of different characteristics of the data and are thought as a realistic representation of financial data; see for instance \cite{cont}.  The constraint that we make is that one can sample the increments of the L\'evy process.  Un general this is not needed for modeling or for our methodology as one can use (approximations of)
the  L\'evy-Ito decomposition (e.g.~\cite{sato}) the description of our methods will be much simpler when this
is the case.
Now, writing the maximum of the process $\{X_t\}_{t\in[0,T]}$ on the interval $(n-1,n]$ as $\overline{X}_n$, we formulate a model for data as:
\begin{equation}
\label{eq:model_for_data}
(Y_n,\overline{Y}_n) | \{X_t\}_{t\in[0,T]},  D_T\setminus(Y_n,\overline{Y}_n) \sim 
\mathcal{N}_2\left((X_n,\overline{X}_n),\Sigma\right)
\end{equation}
where $\Sigma$ is a $2\times 2$ positive definite and symmetric matrix and $\mathcal{N}_2(\mu,\Sigma)$ is the bivariate Gaussian distribution with mean vector $\mu$ and covariance $\Sigma$.  We remark that Gaussian errors
are not necessary and indeed any (tractable) probability density could be used if needed.

We suppose that $\theta=(\Sigma,\phi)$ are unknown where $\phi$ are unknown (real-vector) parameters in the L\'evy process, 
 and with a prior density $\rho$ and our objective to infer the posterior
$$
\pi\left(d(\theta,x_{1:T},\overline{x}_{1:T})|D_T\right) \propto \left\{\prod_{n=1}^T g_{\theta}(z_n|(x_n,\overline{x}_n))
f_{\theta}(d(x_n,\overline{x}_n)|x_{n-1})\right\}\rho(\theta)d\theta
$$
where $z_n=(y_n,\overline{y}_n)$,  $g_{\theta}$ is the bivariate Gaussian density associated to a $\mathcal{N}_2\left((X_n,\overline{X}_n),\Sigma\right)$ distribution $f_{\theta}(d(x_n,\overline{x}_n)|x_{n-1})$ is the transition kernel associated to the L\'evy process, $x_0$ is taken as known and $d\theta$ is the appropriate dimensional Lebesgue measure.  In most applications of practical interest $f_{\theta}$ cannot be simulated without error and we introduce a particularly useful approximation in the next section.


\subsection{The Stick-Breaking Representation}
\label{subsec:SB-rep}

We consider a representation for  $\ov\chi_t:= (X_t,\ov{X}_t,\ov\tau_t(X))$ 
(resp. $\un\chi_t:= (X_t,\un{X}_t,\un\tau_t(X))$) for a given time horizon $t>0$.
Here $X_t$ is the L\'evy process,  $\ov{X}_t$ the maximum and $\ov\tau_t(X)$
the time when the maximum is attained. For all L\'evy processes considered in this paper, $\ov \tau_t$ is almost surely unique in the interval $(0,t)$.


Let $(U_m)_{m\in\bbN}$ be an i.i.d.~sequence with $U_m\sim\mathcal{U}_{(0,1)}$ $m\in\mathbb{N}$ with $\mathcal{U}_{(0,1)}$ denoting 
the uniform distribution on $(0,1)$.
A stick-breaking process 
$\ell=(\ell_m)_{m\in\bbN}$ on $[0,t]$  is given by
$L_0:= t$, $L_m:= L_{m-1}U_m$, 
$\ell_m:= L_{m-1}-L_m$ for $m\in\bbN$,
Let $Y$ be a L\'evy process,  independent of $\ell$, with the same law as $X$.
Then for \textit{any} L\'evy process $X$ we have~\cite[Theorem 11]{cazares2021convex}: 
$$\ov\chi_t\eqd	\sum_{k=1}^\infty\big(Y_{L_{k-1}}-Y_{L_k},
	\max\{Y_{L_{k-1}}-Y_{L_k},0\},
	\ell_k\cdot I_{\{Y_{L_{k-1}}-Y_{L_k}\geq 0\}}\big).$$
In particular, for any
$m\in\bbN$, the stick-breaking approximation (SBA) $\ov\chi_T^{(m)}$ is given as follows:
\begin{equation}\label{eq:chi}
\begin{split}
\ov\chi_t
&\eqd\big(Y_{L_m},\ov{Y}_{L_m},\ov\tau_{L_m}(Y)\big) + \ov\chi_T^{(m)},\qquad \text{where} \\
& \ov\chi_T^{(m)}:=
	\sum_{k=1}^m \big(Y_{L_{k-1}}-Y_{L_k},
	\max\{Y_{L_{k-1}}-Y_{L_k},0\},
	\ell_k\cdot I_{\{Y_{L_{k-1}}-Y_{L_k}\geq 0\}}\big).
\end{split}
\end{equation}
Note that, given $\ell$, the random vectors
$\ov\chi_T^{(m)}$ and 
$\big(Y_{L_m},\ov{Y}_{L_m},\ov\tau_{L_m}(Y)\big)$
in~\eqref{eq:chi} are independent. 

In many L\'evy models of interest it is possible to sample from the law $F(t,x)=\bbP(X_t\leq x)$, $x\in\R$, of  $X_t$ for any time horizon $t>0$ with constant complexity independent of $t$.
L\'evy models, such as CGMY (with stability parameter smaller than one), normal inverse Gaussian process (NIG), variance-gamma process (VG) are in this category.
 Then a procedure that simulates exactly from the law of the SBA 
$\ov\chi_T^{(m)}$ is given in Algorithm~\ref{alg:SBA}.

\begin{algorithm} 
	\caption{SB-Alg.}
	\label{alg:SBA}
	\begin{algorithmic}[1]
		\Require{$m\in\bbN$, fixed time horizon $t>0$} 
		\State{Set $\Lambda_0=t$, $\ov\chi_t^{(0)}=(0,0,0)$}
		\For{$k=1,\ldots,m$}
		\State{Sample $\upsilon_k\sim \mathcal{U}_{(0,1)}$ and put $\lambda_k=\upsilon_k\Lambda_{k-1}$ 
			and $\Lambda_k=\Lambda_{k-1}-\lambda_{k}$}
		\State{Sample $\xi_k\sim F(\lambda_k,\cdot)$ and put
			$\ov\chi_t^{(k)}= 
			\ov\chi_t^{(k-1)}+ 
			(\xi_k,\max\{\xi_k,0\},\lambda_k\cdot I_{\{\xi_k\geq 0\}})$} 
		\EndFor
		\State{Sample $\varsigma_m\sim F(\Lambda_m,\cdot)$ and \Return $\ov\chi_t^{(m)}+(\varsigma_m,\max\{\varsigma_m,0\},\Lambda_m\cdot I_{\{\varsigma_m\geq0\}})$}
	\end{algorithmic}
\end{algorithm}

Denote by $f_{\theta}^{(m)}(d(x_n,\overline{x}_n)|x_{n-1})$ the density of 
the output of Algorithm~\ref{alg:SBA}, using $m\in\bbN$
 sticks over the time horizon $t=1$,  shifted by $x_{n-1}$. 
An approximate posterior density in this case is given by 
\begin{equation}
\label{eq:approx_posterior_stick_breaking}
\pi^{(m)}\left(d(\theta,x_{1:T},\overline{x}_{1:T})|D_T\right) \propto \left\{\prod_{n=1}^T g_{\theta}(z_n|(x_n,\overline{x}_n))
f_{\theta}^{(m)}(d(x_n,\overline{x}_n)|x_{n-1})\right\}\rho(\theta)d\theta
\end{equation}
for any $m\in\bbN$.  We now focus on computational methodology to sample from the distribution associated to
\eqref{eq:approx_posterior_stick_breaking}.

\section{Inference}\label{sec:inference}

\subsection{Single Level MCMC}

We begin with a method to approximate expectations w.r.t.~\eqref{eq:approx_posterior_stick_breaking}. The structure of the model is that of a hidden Markov model and several approaches have been proposed in the literature. Perhaps the most popular is the particle MCMC method \cite{andrieu} which has been used in several related works \cite{bayes_mlmc,ml_cont,med}.  This is the approach that we adopt and it is given in Algorithm
\ref{alg:pmcmc}. Note that one needs to use Algorithm \ref{alg:pf} which calls Algorithm \ref{alg:SBA}. The context in which Algorithm \ref{alg:SBA} is used is of course that $t=1$ and we shift the process by its starting point as specified. Note that in Algorithm \ref{alg:pf} one typically chooses (as we do) $N=\mathcal{O}(T)$ but other choices have been investigated in the literature.

Under minimal assumptions Algorithm \ref{alg:pmcmc} provides a way to approximate expectations w.r.t.~$\pi^{(m)}$ in the following way.  For any $\varphi:\Theta\times\mathbb{R}^{2T}\rightarrow\mathbb{R}$ which is such
that
$$
\pi^{(m)}(\varphi) := \int_{\Theta\times\mathbb{R}^{2T}}\varphi(\theta,x_{1:T},\overline{x}_{1:T})\pi^{(m)}\left(d(\theta,x_{1:T},\overline{x}_{1:T})|D_T\right) 
$$
is finite
one has the estimate 
$$
\pi^{(m),S}(\varphi) := \frac{1}{S+1}\sum_{k=0}^S \varphi(\theta^k,x_{1:T}^k,\overline{x}_{1:T}^k).
$$
Moreover one has the almost sure convergence of $\pi^{(m),S}(\varphi)$ to $\pi^{(m)}(\varphi)$ as $S\rightarrow\infty$.

\begin{algorithm} 
\begin{algorithmic}[1]
		\Require{$N\in\mathbb{N}$ the number of particles,  $T$ the time horizon,  $m\in\mathbb{N}$ level of approximation,  $x_0$,  $D_T$ the data and $\theta$ the parameter.} 
\State{For $i=1,\dots,N$ sample $(X_1^i,\overline{X}_1^i)|x_0$ using Algorithm \ref{alg:SBA}.  Set $k=1$ 
$p_{\theta}^N(z_{-1})=1$
and go to step 2..}
\State{For $i=1,\dots,N$ compute $W_k^i=g_{\theta}(z_k|x_k^i,\overline{x}_k^i)/\{\sum_{j=1}^Ng_{\theta}(z_k|x_k^j,\overline{x}_k^j)\}$. 
Set $p_{\theta}^N(z_{1:k})=p_{\theta}^N(z_{1:k-1})\tfrac{1}{N}\sum_{ji=1}^Ng_{\theta}(z_k|x_k^i,\overline{x}_k^i)$
(where $p_{\theta}^N(z_{1:-1})=p_{\theta}^N(z_{-1})$).
Sample with replacement amongst the $(x_{1:k}^1,\overline{x}_{1:k}^1),\dots,(x_{1:k}^N,\overline{x}_{1:k}^N)$ using the weights $W_k^1,\dots,W_k^N$ calling the resulting samples
$(x_{1:k}^1,\overline{x}_{1:k}^1),\dots,(x_{1:k}^N,\overline{x}_{1:k}k^N)$ also. Go to step 3..
}
\State{For $i=1,\dots,N$ sample $(X_{k+1}^i,\overline{X}_{k+1}^i)|x_k^i$ using Algorithm \ref{alg:SBA}. 
Set $k=k+1$ and if $k=T$ go to step 4.~otherwise go to step 2.
}
\State{For $i=1,\dots,N$ compute $W_k^i=g_{\theta}(z_k|x_k^i,\overline{x}_k^i)/\{\sum_{j=1}^Ng_{\theta}(z_k|x_k^j,\overline{x}_k^j)\}$. 
Set $p_{\theta}^N(z_{1:k})=p_{\theta}^N(z_{1:k-1})\tfrac{1}{N}\sum_{ji=1}^Ng_{\theta}(z_k|x_k^i,\overline{x}_k^i)$.
Pick one trajectory $(x_{1:T}^1,\overline{x}_{1:T}^1),\dots,(x_{1:T}^N,\overline{x}_{1:T}^N)$
using the weights $W_T^1,\dots,W_T^N$.  Go to step 5..}
\State{\Return the selected trajectory $(x_{1:T},\overline{x}_{1:T})$ from step 4.  and $p_{\theta}^N(z_{1:T})$.}
\end{algorithmic}
\caption{Particle Filter.}\label{alg:pf}
\end{algorithm}

\begin{algorithm} 
\begin{algorithmic}[1]
		\Require{$N\in\mathbb{N}$ the number of particles,  $S$ number of MCMC samples, $q$ a positive Markov density on $\Theta$,  $T$ the time horizon,  $m\in\mathbb{N}$ level of approximation,  $x_0$,  $D_T$ the data and $\theta$ the parameter.} 
\State{Set $k=1$ and sample $\theta^0$ from $\rho$ and run Algorithm \ref{alg:pf} giving the initial
$(x_{1:T}^0,\overline{x}_{1:T}^0)$ and $p_{\theta^0}^{N,0}(z_{1:T})$. Go to step 2..}
\State{Propose $\theta'|\theta^{k-1}$ using the distribution associated to $q(\cdot|\theta^{k-1})$
and then run Algorithm \ref{alg:pf} with this given $\theta'$ yielding proposed
$(x_{1:T}',\overline{x}_{1:T}')$ and $p_{\theta'}^{N,'}(z_{1:T})$.  Compute
$$
A = \min\left\{1,
\frac{p_{\theta'}^{N,'}(z_{1:T})\rho(\theta')q(\theta^{k-1}|\theta')}{p_{\theta^{k-1}}^{N,k-1}(z_{1:T})\rho(\theta^{k-1})q(\theta'|\theta^{k-1})}
\right\}.
$$
Generate $U\sim\mathcal{U}_{(0,1)}$ and if $U<A$ set 
$(x_{1:T}^k,\overline{x}_{1:T}^k)=(x_{1:T}',\overline{x}_{1:T}')$, $
p_{\theta^k}^{N,k}(z_{1:T})=p_{\theta'}^{N,'}(z_{1:T})$
and $\theta^k=\theta'$. Otherwise set 
$(x_{1:T}^k,\overline{x}_{1:T}^k)=(x_{1:T}^{k-1},\overline{x}_{1:T}^{k-1})$,  $
p_{\theta^k}^{N,k}(z_{1:T})=p_{\theta^{k-1}}^{N,k-1}(z_{1:T})$
and $\theta^k=\theta^{k-1}$. Set $k=k+1$ and if $k=S+1$ go to step 3.  otherwise go to the start of step
2..
}
\State{\Return
$(\theta^0,x_{1:T}^0,\overline{x}_{1:T}^0),\dots,(\theta^S,x_{1:T}^S,\overline{x}_{1:T}^S)$.
}
\end{algorithmic}
\caption{Particle MCMC.}\label{alg:pmcmc}
\end{algorithm}

\subsection{Multilevel MCMC}

\subsubsection{Multilevel Identity}

Let $M\in\mathbb{N}$, $M>1$ be given,  then we have the trivial telescoping identity:
$$
\pi^{(M)}(\varphi) = \pi^{(1)}(\varphi) + \sum_{m=1}^{M-1} \left\{\pi^{(m+1)}(\varphi)-\pi^{(m)}(\varphi)\right\}
$$
assuming all expectations are well-defined,  which we assume throughout without mentioning further.
The basic idea of MLMC is then to approximate $\pi^{(m+1)}(\varphi)-\pi^{(m)}(\varphi)$,  for $m\in\{1,\dots,M-1\}$
using a dependent sampling strategy.  We will detail a high-level idea here and then subsequently provide specifics.

The scenario that we will now describe is not quite what we will implement, but, is close to the overall idea: the slight inconsisitencies are made to make the exposition at this stage easier to follow.
Let $\overline{\pi}^{(m+1)}$, $m\in\{1,\dots,M-1\}$ be a probability on $\mathsf{E}:=\Theta\times\mathbb{R}^{4T}$.
To simplify the notation set $u_n=(x_n,\overline{x}_n)$, $n\in\{1,\dots,T\}$ and as we will have another trajectory in $\mathsf{E}$, set $\overline{u}_n=(v_n,\overline{v}_n)\in\mathbb{R}^2$, $n\in\{1,\dots,T\}$.
Suppose that
each of $\pi^{(m)}$ and $\overline{\pi}^{(m+1)}$ have positive probability densities (denoted with the same symbol) w.r.t.~to some dominating $\sigma-$finite measure.  Then we can write:
\begin{equation}\label{eq:basic_id}
\pi^{(m+1)}(\varphi)-\pi^{(m)}(\varphi) = 
\overline{\pi}^{(m+1)}\left(\varphi_1 R_1\right) - \overline{\pi}^{(m+1)}\left(\varphi_2 R_2\right)
\end{equation}
where
\begin{eqnarray*}
\overline{\pi}^{(m+1)}\left(\varphi_1 R_1\right) & = & 
\int_{\mathsf{E}}
\varphi(\theta,u_{1:T}^{(m+1)})R_1(\theta,u_{1:T}^{(m+1)},\overline{u}_{1:T}^{(m+1)})
\overline{\pi}^{(m+1)}(\theta,u_{1:T}^{(m+1)},\overline{u}_{1:T}^{(m+1)})d(\theta,u_{1:T}^{(m+1)},\overline{u}_{1:T}^{(m+1)}) \\
\overline{\pi}^{(m+1)}\left(\varphi_2 R_2\right) & = & 
\int_{\mathsf{E}}
\varphi(\theta,\overline{u}_{1:T}^{(m+1)})R_2(\theta,u_{1:T}^{(m+1)},\overline{u}_{1:T}^{(m+1)})
\overline{\pi}^{(m+1)}(\theta,u_{1:T}^{(m+1)},\overline{u}_{1:T}^{(m+1)})d(\theta,u_{1:T}^{(m+1)},\overline{u}_{1:T}^{(m+1)})
\end{eqnarray*}
and
\begin{eqnarray*}
R_1(\theta,u_{1:T}^{(m+1)},\overline{u}_{1:T}^{(m+1)}) & = & \frac{\pi^{(m+1)}(\theta,u_{1:T}^{(m+1)})}
{\overline{\pi}^{(m+1)}(\theta,u_{1:T}^{(m+1)},\overline{u}_{1:T}^{(m+1)})} \\
R_2(\theta,u_{1:T}^{(m+1)},\overline{u}_{1:T}^{(m+1)}) & = & \frac{\pi^{(m)}(\theta,\overline{u}_{1:T}^{(m+1)})}
{\overline{\pi}^{(m+1)}(\theta,u_{1:T}^{(m+1)},\overline{u}_{1:T}^{(m+1)})}.
\end{eqnarray*}
The identity \eqref{eq:basic_id} successfully writes a difference of expectations as the expectation of a difference,
that is one need only sample a single probability to approximate the difference.  Moreover if:
$$
\overline{\pi}^{(m+1)}\left((\varphi_1 R_1-\varphi_2 R_2)^2\right)
$$
falls sufficiently fast as $m$ grows,  then it is possible that one can produce a method that approximates
\eqref{eq:basic_id} for each $m\in\{1,\dots,M-1\}$ independently,  that the cost to produce an approximation
(for a given error associated to a criterion)
of $\pi^{(M)}(\varphi)$ is reduced,  rather than just considering $\pi^{(M)}$ itself; see \cite{bayes_mlmc,ml_cont,med} for example.

The main issue is then how can one design $\overline{\pi}^{(m+1)}$ appropriately and this relies on simulating
couples from the SBA as in Algorithm \ref{alg:SBA}. This is the topic of the next section.

\subsubsection{Coupling}

The simulation algorithm for $(\ov\chi_t^{(m)}, \ov\chi_t^{(m+1)})$ is given in Algorithm \ref{alg:SBA_coupling}.
The approach is based on the obvious coupling $(\ov\chi_t^{(m)}, \ov\chi_t^{(m+1)})$ with the second component  obtained from the first by running Algorithm~\ref{alg:SBA} over the $(m+1)$-st stick, while keeping the other samples from  $\ov\chi_T^{(m)}$.
Note that the first and the second component of $(\ov\chi_t^{(m)},\ov\chi_t^{(m+1)})$ sampled by Algorithm~\ref{alg:SBA_coupling}  have the same law as the outputs of Algorithm~\ref{alg:SBA} for parameters $m$ and $m+1$, respectively. Note also that the final increments $\varsigma_m$ and $\varsigma_{m+1}$ are independent, while the increments $\xi_k$, $k\in\{1,\ldots,m\}$, are common in both components.

%

\begin{algorithm} 
	\caption{(SB-AlgC) Simulation of the coupling $(\ov\chi_t^{(m)}, \ov\chi_t^{(m+1)})$.}
	\label{alg:SBA_coupling} 
	\begin{algorithmic}[1]
		\Require{$m\in\bbN$, fixed time horizon $t>0$} 
		\State{Set $\Lambda_0=t$, $\ov\chi_t^{(0)}=(0,0,0)$}
		\For{$k=1,\ldots,m+1$}
		\State{Sample $\upsilon_k\sim \mathcal{U}_{(0,1)}$ and put $\lambda_k=\upsilon_k\Lambda_{k-1}$ 
			and $\Lambda_k=\Lambda_{k-1}-\lambda_{k}$}
		\State{Sample $\xi_k\sim F(\lambda_k,\cdot)$ and put
			$\ov\chi_t^{(k)}= 
			\ov\chi_t^{(k-1)}+ 
			(\xi_k,\max\{\xi_k,0\},\lambda_k\cdot I_{\{\xi_k\geq 0\}})$} 
		\EndFor
		
		\State{Sample $\varsigma_m\sim F(\Lambda_m,\cdot)$ and update $\ov\chi_t^{(m)}=\ov\chi_t^{(m)}+(\varsigma_m,\max\{\varsigma_m,0\},\Lambda_m\cdot I_{\{\varsigma_m\geq0\}})$}
		\State{Sample $\varsigma_{m+1}\sim F(\Lambda_{m+1},\cdot)$ and update $\ov\chi_t^{(m+1)}=\ov\chi_t^{(m+1)}+(\varsigma_{m+1},\max\{\varsigma_{m+1},0\},\Lambda_{m+1}\cdot I_{\{\varsigma_{m+1}\geq0\}})$}
		\State{\Return $(\ov\chi_t^{(m)},\ov\chi_t^{(m+1)})$}
			\end{algorithmic}
\end{algorithm}

\subsubsection{Identity}

Now to design the actual $\overline{\pi}^{(m+1)}$ that we will use,  we denote by $\overline{f}^{(m+1)}(d(u_n,\overline{u}_n)|x_{n-1},v_{n-1})$ as the coupled simulation as described in Algorithm \ref{alg:SBA_coupling} with the appropriate modifications as was done for Algorithm \ref{alg:SBA} and $f(u_n|x_{n-1})$.  Then we define
$$
\overline{\pi}^{(m+1)}\left(d(\theta,u_{1:T},\overline{u}_{1:T})|D_T\right) \propto \left\{\prod_{n=1}^T 
\overline{G}_{\theta}(z_n,u_n,\overline{u}_n)
\overline{f}^{(m+1)}(d(u_n,\overline{u}_n)|x_{n-1},v_{n-1})\right\}\rho(\theta)d\theta
$$
where we will take $\overline{G}_{\theta}(z_n,u_n,\overline{u}_n)=\max\{g_{\theta}(z_n|u_n),g_{\theta}(z_n|\overline{u}_n)\}$, but other choices are possible; see \cite{delta} for example.
Note that $v_0=x_0$ that is,  the second L\'evy trajectory which is used to help approximate $\pi^{(m)}$
is started at the same point as the L\'evy process.
Now set
\begin{eqnarray*}
R_1(\theta,u_{1:T},\overline{u}_{1:T}) & := & \prod_{n=1}^T\frac{g_{\theta}(z_n|u_n)}{\overline{G}_{\theta}(z_n,u_n,\overline{u}_n)} \\
R_2(\theta,u_{1:T},\overline{u}_{1:T}) & := & \prod_{n=1}^T\frac{g_{\theta}(z_n|\overline{u}_n)}{\overline{G}_{\theta}(z_n,u_n,\overline{u}_n)}
\end{eqnarray*}
Then we have the identity
$$
\pi^{(m+1)}(\varphi)-\pi^{(m)}(\varphi) = \frac{\overline{\pi}^{(m+1)}(\varphi_1R_1)}{\overline{\pi}^{(m+1)}(R_1)} - 
\frac{\overline{\pi}^{(m+1)}(\varphi_2R_2)}{\overline{\pi}^{(m+1)}(R_2)}.
$$
The main issue is now to derive an MCMC method to sample from $\pi^{(m+1)}$ which is the topic of the next section.

\subsubsection{MCMC Method for $\overline{\pi}^{(m+1)}$}

The MCMC method to sample from $\overline{\pi}^{(m+1)}$ is presented in Algorithms \ref{alg:delta_pf}
and \ref{alg:pmcmc_coup}.  The structure of the simulation is much the same as for Algorithms \ref{alg:pf}
and \ref{alg:pmcmc} with modifications for the extended state-space of the new target $\overline{\pi}^{(m+1)}$.
Expectations w.r.t.~$\overline{\pi}^{(m+1)}$ can be estimated in the following manner.  We set
\begin{eqnarray*}
\overline{\pi}^{(m+1),S}(\varphi_1 R_1) & := & \frac{1}{S+1}\sum_{k=0}^S\varphi(\theta^k,u_{1:T}^k)R_1(\theta^k,u_{1:T}^k,\overline{u}_{1:T}^k) \\
\overline{\pi}^{(m+1),S}(\varphi_2 R_2) & := & \frac{1}{S+1}\sum_{k=0}^S\varphi(\theta^k,\overline{u}_{1:T}^k)R_2(\theta^k,u_{1:T}^k,\overline{u}_{1:T}^k).
\end{eqnarray*}

\begin{algorithm} 
\begin{algorithmic}[1]
		\Require{$N\in\mathbb{N}$ the number of particles,  $T$ the time horizon,  $m\in\mathbb{N}$ level of approximation,  $x_0$,  $D_T$ the data and $\theta$ the parameter.} 
\State{For $i=1,\dots,N$ sample $(U_1^i,\overline{U}_1^i)|x_0$ using Algorithm \ref{alg:SBA_coupling}.  Set $k=1$ 
$\overline{p}_{\theta}^N(z_{-1})=1$
and go to step 2..}
\State{For $i=1,\dots,N$ compute $\overline{W}_k^i=\overline{G}_{\theta}(z_k,xu_k^i,\overline{u}_k^i)/\{\sum_{j=1}^N\overline{G}_{\theta}(z_k,u_k^j,\overline{u}_k^j)\}$. 
Set $\overline{p}_{\theta}^N(z_{1:k})=\overline{p}_{\theta}^N(z_{1:k-1})\tfrac{1}{N}\sum_{ji=1}^N\overline{G}_{\theta}(z_k,u_k^i,\overline{u}_k^i)$
(where $\overline{p}_{\theta}^N(z_{1:-1})=\overline{p}_{\theta}^N(z_{-1})$).
Sample with replacement amongst the $(u_{1:k}^1,\overline{u}_{1:k}^1),\dots,(u_{1:k}^N,\overline{u}_{1:k}^N)$ using the weights $\overline{W}_k^1,\dots,\overline{W}_k^N$ calling the resulting samples
$(u_{1:k}^1,\overline{u}_{1:k}^1),\dots,(u_{1:k}^N,\overline{u}_{1:k}k^N)$ also.  Go to step 3..
}
\State{For $i=1,\dots,N$ sample $(U_{k+1}^i,\overline{U}_{k+1}^i)|x_k^i,v_k^i$ using Algorithm \ref{alg:SBA_coupling}. 
Set $k=k+1$ and if $k=T$ go to step 4.~otherwise go to step 2.
}
\State{For $i=1,\dots,N$ compute $\overline{W}_k^i=\overline{G}_{\theta}(z_k,u_k^i,\overline{u}_k^i)/\{\sum_{j=1}^N\overline{G}_{\theta}(z_k,u_k^j,\overline{u}_k^j)\}$. 
Set $\overline{p}_{\theta}^N(z_{1:k})=\overline{p}_{\theta}^N(z_{1:k-1})\tfrac{1}{N}\sum_{ji=1}^N\overline{G}_{\theta}(z_k,u_k^i,\overline{u}_k^i)$.
Pick one trajectory $(u_{1:T}^1,\overline{u}_{1:T}^1),\dots,(u_{1:T}^N,\overline{u}_{1:T}^N)$
using the weights $\overline{W}_T^1,\dots,\overline{W}_T^N$.  Go to step 5..}
\State{\Return the selected trajectory $(u_{1:T},\overline{u}_{1:T})$ from step 4.  and $\overline{p}_{\theta}^N(z_{1:T})$.}
\end{algorithmic}
\caption{Delta Particle Filter.}\label{alg:delta_pf}
\end{algorithm}

\begin{algorithm} 
\begin{algorithmic}[1]
		\Require{$N\in\mathbb{N}$ the number of particles,  $S$ number of MCMC samples, $q$ a positive Markov density on $\Theta$,  $T$ the time horizon,  $m\in\mathbb{N}$ level of approximation,  $x_0$,  $D_T$ the data and $\theta$ the parameter.} 
\State{Set $k=1$ and sample $\theta^0$ from $\rho$ and run Algorithm \ref{alg:delta_pf} giving the initial
$(u_{1:T}^0,\overline{u}_{1:T}^0)$ and $\overline{p}_{\theta^0}^{N,0}(z_{1:T})$. Go to step 2..}
\State{Propose $\theta'|\theta^{k-1}$ using the distribution associated to $q(\cdot|\theta^{k-1})$
and then run Algorithm \ref{alg:delta_pf} with this given $\theta'$ yielding proposed
$(u_{1:T}',\overline{u}_{1:T}')$ and $\overline{p}_{\theta'}^{N,'}(z_{1:T})$.  Compute
$$
A = \min\left\{1,
\frac{\overline{p}_{\theta'}^{N,'}(z_{1:T})\rho(\theta')q(\theta^{k-1}|\theta')}{\overline{p}_{\theta^{k-1}}^{N,k-1}(z_{1:T})\rho(\theta^{k-1})q(\theta'|\theta^{k-1})}
\right\}.
$$
Generate $U\sim\mathcal{U}_{(0,1)}$ and if $U<A$ set 
$(u_{1:T}^k,\overline{u}_{1:T}^k)=(u_{1:T}',\overline{u}_{1:T}')$, $
\overline{p}_{\theta^k}^{N,k}(z_{1:T})=\overline{p}_{\theta'}^{N,'}(z_{1:T})$
and $\theta^k=\theta'$. Otherwise set 
$(u_{1:T}^k,\overline{u}_{1:T}^k)=(u_{1:T}^{k-1},\overline{u}_{1:T}^{k-1})$,  $
\overline{p}_{\theta^k}^{N,k}(z_{1:T})=\overline{p}_{\theta^{k-1}}^{N,k-1}(z_{1:T})$
and $\theta^k=\theta^{k-1}$. Set $k=k+1$ and if $k=S+1$ go to step 3.  otherwise go to the start of step
2..
}
\State{\Return
$(\theta^0,u_{1:T}^0,\overline{u}_{1:T}^0),\dots,(\theta^S,u_{1:T}^S,\overline{u}_{1:T}^S)$.
}
\end{algorithmic}
\caption{Particle MCMC for $\overline{\pi}^{(m+1)}$.}\label{alg:pmcmc_coup}
\end{algorithm}

\subsubsection{Algorithm and Estimate}\label{sec:est}

The approach that we shall use is now the following.
\begin{itemize}
\item{Run Algorithm \ref{alg:pmcmc} ($m=1$) with $S_1$ samples.}
\item{Indendently of Algorithm \ref{alg:pmcmc} and independently for $m\in\{1,\dots,M-1\}$
run Algorithm \ref{alg:pmcmc_coup} with $S_{m+1}$ samples.}
\end{itemize}
Then one has the estimator
$$
\widehat{\pi}^{(M)}(\varphi) = \pi^{(m),S_1}(\varphi) + \sum_{m=1}^{M-1}\left\{
\frac{\overline{\pi}^{(m+1),S_{m+1}}(\varphi_1 R_1)}{\overline{\pi}^{(m+1),S_{m+1}}(R_1)} -
\frac{\overline{\pi}^{(m+1),S_{m+1}}(\varphi_2 R_2)}{\overline{\pi}^{(m+1),S_{m+1}}(R_2)}
\right\}.
$$
The question now is how to choose $M$ and $S_1,\dots,S_M$, which is the topic of the next section.

\subsection{Theoretical Result}

Set $\Delta_m=\int_{\Theta}\rho(\theta)\mathbb{E}[(Y_{L_m}-Y_{L_{m+1}})^2]d\theta$ and $\overline{\Delta}_M=
\max\{|\int_{\Theta}\rho(\theta)\mathbb{E}[Y_{L_M}]|d\theta,|\int_{\Theta}\rho(\theta)\mathbb{E}[\overline{Y}_{L_M}]d\theta|\}$. We have the following result where we note that the assumption is stated in the appendix.  Below $\mathcal{B}_b(\mathsf{E})$ are the bounded and measurable real-valued functions on $\mathsf{E}$ and $\|\varphi\|_{\infty}=\sup_{x\in\mathsf{E}}|\varphi(x)|$.
$\textrm{Lip}(\mathsf{E})$ are the globally Lipschitz,  real-valued functions on $\mathsf{E}$, specifically
 the Lipschitz constant is written $\|\varphi\|_{\textrm{Lip}}$ and for any $(x,y)\in\mathsf{E}$,
$|\varphi(x)-\varphi(y)|\leq\|\varphi\|_{\textrm{Lip}}\|x-y\|_2$ where $\|\cdot\|_2$ is the $L_2-$norm.
Set $\|\varphi\|=\max\{\|\varphi\|_{\infty},\|\varphi\|_{\textrm{Lip}}\}$.
In the result below $\mathbb{E}[\cdot]$ denotes the expectation associated to the law of the randomness generated by the procedure in Section \ref{sec:est}.

\begin{prop}\label{prop:only_prop}
Assume (A\ref{ass:1}).  Then there exists a $C<+\infty$ such that for any $(M,S_{1:M},\varphi)\in\mathbb{N}^{M+1}\times\mathcal{B}_b(\mathsf{E})\cap\textrm{\emph{Lip}}(\mathsf{E})$:
$$
\mathbb{E}\left[\left(\widehat{\pi}^{(M)}(\varphi)-\pi(\varphi)\right)^2\right] \leq
C\|\varphi\|\left(\frac{1}{S_1}+\sum_{m=1}^{M-1}\frac{\Delta_m}{S_{m+1}}+
\sum_{m=1}^{M-1}\sum_{q=m+1}^M\frac{\Delta_m^{1/2}}{S_{m+1}}\frac{\Delta_q^{1/2}}{S_{q+1}}
+
\overline{\Delta}_M^2
\right).
$$
\end{prop}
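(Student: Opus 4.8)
I would begin from the elementary split
\[
\mathbb{E}\big[(\widehat{\pi}^{(M)}(\varphi)-\pi(\varphi))^2\big]\le 2(\pi^{(M)}(\varphi)-\pi(\varphi))^2+2\,\mathbb{E}\big[(\widehat{\pi}^{(M)}(\varphi)-\pi^{(M)}(\varphi))^2\big],
\]
which separates a deterministic truncation bias from the Monte Carlo error. For the first term, $\pi$ and $\pi^{(M)}$ differ only through the transition densities $f_\theta$ versus $f_\theta^{(M)}$, and by \eqref{eq:chi} the law underlying $f_\theta$ is that underlying $f_\theta^{(M)}$ with the crude final block $(\varsigma_M,\max\{\varsigma_M,0\},\Lambda_M I_{\{\varsigma_M\ge0\}})$ of Algorithm \ref{alg:SBA} replaced by the exact residual $(Y_{L_M},\overline{Y}_{L_M},\overline{\tau}_{L_M}(Y))$; the value coordinate is unchanged in law, the argmax-time coordinate does not enter $g_\theta$, and the discrepancy in the running-maximum coordinate has mean at most $C\overline{\Delta}_M$. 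Propagating this perturbation through the $T$ likelihood--transition factors of the posterior --- using boundedness and Lipschitz continuity of $g_\theta(z|\cdot)$ in the mean argument together with two-sided control of the normalising constants of $\pi$ and $\pi^{(M)}$, all furnished by (A\ref{ass:1}) with $T$ fixed --- a routine telescoping estimate yields $|\pi^{(M)}(\varphi)-\pi(\varphi)|\le C\|\varphi\|\,\overline{\Delta}_M$, which gives the $\overline{\Delta}_M^2$ term.

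For the Monte Carlo error I would subtract the telescoping identity for $\pi^{(M)}(\varphi)$ from the estimator of Section \ref{sec:est}, writing $\widehat{\pi}^{(M)}(\varphi)-\pi^{(M)}(\varphi)=\sum_{m=0}^{M-1}Z_m$, where $Z_0:=\pi^{(1),S_1}(\varphi)-\pi^{(1)}(\varphi)$ and, for $m\in\{1,\dots,M-1\}$,
\[
Z_m:=\frac{\overline{\pi}^{(m+1),S_{m+1}}(\varphi_1R_1)}{\overline{\pi}^{(m+1),S_{m+1}}(R_1)}-\frac{\overline{\pi}^{(m+1),S_{m+1}}(\varphi_2R_2)}{\overline{\pi}^{(m+1),S_{m+1}}(R_2)}-\big(\pi^{(m+1)}(\varphi)-\pi^{(m)}(\varphi)\big).
\]
Since the chain of Algorithm \ref{alg:pmcmc} and the coupled chains of Algorithm \ref{alg:pmcmc_coup} at the distinct levels are run independently, the $Z_m$ are mutually independent, hence
\[
\mathbb{E}\Big[\Big(\sum_{m=0}^{M-1}Z_m\Big)^2\Big]=\sum_{m=0}^{M-1}\mathbb{E}[Z_m^2]+\sum_{m\ne q}\mathbb{E}[Z_m]\mathbb{E}[Z_q]\le\sum_{m=0}^{M-1}\mathbb{E}[Z_m^2]+\Big(\sum_{m=0}^{M-1}|\mathbb{E}[Z_m]|\Big)^2,
\]
and it remains to establish $\mathbb{E}[Z_0^2]\le C\|\varphi\|/S_1$, $|\mathbb{E}[Z_0]|\le C\|\varphi\|/S_1$, and, for $m\ge1$, $\mathbb{E}[Z_m^2]\le C\|\varphi\|\Delta_m/S_{m+1}$ and $|\mathbb{E}[Z_m]|\le C\|\varphi\|\Delta_m^{1/2}/S_{m+1}$, with $C$ independent of $M$, $S_{1:M}$ and $\varphi$.

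The bounds on $Z_0$ are the standard second-moment and bias estimates for the particle marginal Metropolis--Hastings chain of Algorithm \ref{alg:pmcmc}, which is exactly invariant for (an extension of) $\pi^{(1)}$ because $p_\theta^N(z_{1:T})$ is unbiased; they follow from the geometric ergodicity in (A\ref{ass:1}) and $\varphi\in\mathcal{B}_b(\mathsf{E})$. For $m\ge1$ the argument rests on three facts. \emph{(i)} Algorithm \ref{alg:SBA_coupling} re-randomises only the $(m+1)$-st stick, so the pair $(u_n,\overline{u}_n)$ produced by Algorithm \ref{alg:delta_pf} differs in its two components only through $\xi_{m+1},\varsigma_m,\varsigma_{m+1}$, whence $\overline{\pi}^{(m+1)}(\|u_n-\overline{u}_n\|_2^2)\le C\Delta_m$; consequently $\overline{\pi}^{(m+1)}((R_1-R_2)^2)\le C\Delta_m$, $\overline{\pi}^{(m+1)}((1-R_i)^2)\le C\Delta_m$, $1-\overline{\pi}^{(m+1)}(R_i)\le C\Delta_m^{1/2}$, and $|\pi^{(m+1)}(\varphi)-\pi^{(m)}(\varphi)|\le C\|\varphi\|\Delta_m^{1/2}$, so that the centred function $\Psi:=(\varphi_1R_1-\varphi_2R_2)-\pi^{(m+1)}(\varphi)R_1+\pi^{(m)}(\varphi)R_2$ satisfies $\overline{\pi}^{(m+1)}(\Psi^2)\le C\|\varphi\|^2\Delta_m$ (here $R_1,R_2\in(0,1]$ and boundedness/Lipschitz continuity of $g_\theta$ are used). \emph{(ii)} A first-order expansion of the two ratio-of-averages estimators, which share the same chain, collapses $Z_m$ --- off the event $\{\overline{\pi}^{(m+1),S_{m+1}}(R_i)\le\tfrac12\overline{\pi}^{(m+1)}(R_i)\}$, on which $|Z_m|\le 3\|\varphi\|_\infty$ and whose contribution is controlled to order $\Delta_m/S_{m+1}$ by combining polynomial and exponential tail bounds for the chain of Algorithm \ref{alg:pmcmc_coup} --- into a sum of products in which every term carries a factor that is either an empirical average of $\Psi$, of $R_1-R_2$ or of $1-R_i$, or a deterministic quantity of size $O(\Delta_m^{1/2})$ (such as $\overline{\pi}^{(m+1)}(R_1-R_2)$ or $\pi^{(m+1)}(\varphi)-\pi^{(m)}(\varphi)$), the remaining factors being uniformly bounded or ordinary empirical averages of bounded functions; the denominators $\overline{\pi}^{(m+1)}(R_i)$ are bounded below uniformly in $m$, being ratios of normalising constants of $\pi^{(m+1)}$ and $\overline{\pi}^{(m+1)}$ that are two-sided controlled under (A\ref{ass:1}). \emph{(iii)} For any $\psi$ with $\overline{\pi}^{(m+1)}(\psi^2)\le\sigma^2$ one has $\mathbb{E}[(\overline{\pi}^{(m+1),S}(\psi)-\overline{\pi}^{(m+1)}(\psi))^2]\le C\sigma^2/S$ and $|\mathbb{E}[\overline{\pi}^{(m+1),S}(\psi)-\overline{\pi}^{(m+1)}(\psi)]|\le C\sigma/S$ by the $L_2$-geometric ergodicity of the chain in (A\ref{ass:1}). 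Combining \emph{(i)}--\emph{(iii)} produces the claimed bounds on $\mathbb{E}[Z_m^2]$ and $|\mathbb{E}[Z_m]|$.

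Finally one assembles: $\sum_m\mathbb{E}[Z_m^2]\le C\|\varphi\|(1/S_1+\sum_{m=1}^{M-1}\Delta_m/S_{m+1})$, while, since the sticks decay geometrically so that $\sum_{m\ge1}\Delta_m^{1/2}<\infty$ (absorbed into $C$),
\[
\Big(\sum_{m=0}^{M-1}|\mathbb{E}[Z_m]|\Big)^2\le C\|\varphi\|^2\Big(\frac1{S_1}+\sum_{m=1}^{M-1}\frac{\Delta_m^{1/2}}{S_{m+1}}\Big)^2\le C\|\varphi\|^2\Big(\frac1{S_1}+\sum_{m=1}^{M-1}\frac{\Delta_m}{S_{m+1}}+\sum_{m=1}^{M-1}\sum_{q=m+1}^{M}\frac{\Delta_m^{1/2}}{S_{m+1}}\frac{\Delta_q^{1/2}}{S_{q+1}}\Big),
\]
on expanding the square, bounding the diagonal by $\Delta_m/S_{m+1}^2\le\Delta_m/S_{m+1}$, using $S_1^{-1}\sum_{m\ge1}\Delta_m^{1/2}/S_{m+1}\le CS_1^{-1}$, and re-indexing; together with the $\overline{\Delta}_M^2$ term and a final tidy-up of the $\|\varphi\|$-dependence this gives the stated bound. \textbf{The main obstacle} is steps \emph{(ii)}--\emph{(iii)}: showing that the \emph{difference} of the two coupled ratio estimators fluctuates at scale $(\Delta_m/S_{m+1})^{1/2}$ rather than $S_{m+1}^{-1/2}$, which requires tracking through the Taylor expansion of a ratio exactly which quantities are proportional to the inter-level discrepancy, and --- crucially --- doing so with ergodicity, concentration and normalising-constant constants that are \emph{uniform in $m$}; securing that uniformity is precisely what Assumption (A\ref{ass:1}) must supply and is the principal hypothesis behind the result.
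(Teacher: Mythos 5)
Your proposal is correct and follows essentially the same route as the paper, whose own proof consists only of deferring to the multilevel particle-MCMC theory of \cite{bayes_mlmc,ml_cont} combined with the stick-breaking coupling: the bias--variance split, the level-wise independence giving a sum of second moments plus a product of biases (which is exactly where the cross term $\sum_{m}\sum_{q}\Delta_m^{1/2}\Delta_q^{1/2}/(S_{m+1}S_{q+1})$ originates), and the coupled second-moment rate $\overline{\pi}^{(m+1)}(\|u_n-\overline{u}_n\|_2^2)\leq C\Delta_m$ from Algorithm \ref{alg:SBA_coupling} are precisely the ingredients those references supply. Your sketch is in fact substantially more explicit than the paper's one-line proof, and the obstacle you single out --- obtaining the $\Delta_m^{1/2}/S_{m+1}$ bias rate for the coupled ratio estimators with constants uniform in $m$ --- is indeed the technical core of the cited arguments.
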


\begin{proof}
Follows from the theory in \cite{bayes_mlmc,ml_cont} and the SB representation.  As the proofs in \cite{bayes_mlmc,ml_cont} are essentially repeated,  we omit them.
\end{proof}

The implication of this result very much depends upon the properties of the increments of the L\'evy process.
For instance, in the case that one has a Brownian motion,  one can take $M=\mathcal{O}(|\log(\epsilon)|)$
for some $\epsilon>0$ and $S_m=\mathcal{O}(\epsilon^{-2}\Delta_m^{\alpha})$,  $\alpha\in(0,1/2)$ and obtain a bound on the mean square errror,  which is what is given in Proposition \ref{prop:only_prop}, that is $\mathcal{O}(\epsilon^2)$.  The cost to achieve this error is the optimal $\mathcal{O}(\epsilon^{-2})$. If one only considers a single level $M$ then it is easy to see that the cost to achieve this same error is $\mathcal{O}(\epsilon^{-2}\log(\epsilon)^2)$.

\section{Numerical Results}\label{sec:numerics}

In this section, we provide detailed numerical illustrations of the performance of our proposed multilevel particle Metropolis-Hastings (MLPMMH) algorithm, benchmarked against the standard particle Metropolis-Hastings (PMMH) approach. Through a series of simulations, we demonstrate the efficacy of MLPMMH in tackling two stochastic equation models driven by L\'evy processes, which are commonly used in quantitative finance, showcasing its advantages and benefits in estimation and inference. Specifically, we consider three illustrative examples that highlight the capabilities of MLPMMH in handling complex models and real-world data. Firstly, we apply our algorithms to a subordinated Brownian motion model with a Gamma process, using both synthetic observations and real data from the S\&P 500 index, with a focus on mean squared error (MSE) and computational cost. Secondly, we extend our analysis to a subordinated Brownian motion model with an Inverse Gamma process, incorporating noisy observations data with an unknown covariance matrix and using the Hedge return fund data to illustrate the Bayesian parameter estimation capabilities of our methodology. Through these examples, we provide a thorough evaluation of the MLPMMH algorithm, highlighting its potential to improve the efficiency and accuracy of Bayesian inference in complex stochastic models.

\subsection{Model Settings}

We now turn to two numerical examples related to L\'evy processes, which are essential in modeling various types of stochastic dynamics.
\subsubsection{Model 1: Brownian Motion with Gamma (BMG) Subordinator}

Our first L\'evy process model involves time-changing a standard Brownian motion using an independent Gamma subordinator. The model's dynamics are governed by the following equation:
\begin{align}
\label{eqBMG}
X_t  = b \; t + \sigma  W_{Z_t}. 
\end{align}

In this context, \( b \) represents a constant drift coefficient, \( t \) denotes time, \( \sigma \) signifies the volatility coefficient, and \( W_{Z_t} \) denotes a one-dimensional standard Brownian motion subordinated by an increasing Gamma process \( Z_t \). The process \( Z_t \) follows a Gamma distribution with parameters \( \alpha t \) and \( \beta \), where \( \alpha \) and \( \beta \) are fixed constants. Specifically, in our simulations, we use \( \alpha = 1.5 \) and \( \beta = 2.0 \).

The model (\ref{eqBMG}) combines a linear deterministic trend with a stochastic term that incorporates a time-change process. The drift term, $bt$, captures the deterministic trend, while the subordinated Brownian motion, $\sigma W_{Z_t}$, introduces randomness and volatility.

\subsubsection{Model 2: Brownian motion with Inverse Gamma (BMIG) Subordinator}
Our second L\'evy process model is similar to the first but with an Inverse Gamma subordinator. The model's dynamics are still governed by the same equation:

\begin{align}
\label{eq:BM}
X_t  = b \; t + \sigma  W_{Z_t}, 
\end{align}
where $b$, $t$, and $\sigma$ are as before, but now $Z_t$ is an Inverse Gamma process.

In both models to be considered, for $k\in\{1,\dots,T\}$, we select the observations as $(Y_k, \overline{Y}_k)$, which, conditional on the process ${X_t}, \; {t \in [0,T]}$, follow a bivariate normal distribution:
$$
(Y_k,\overline{Y}_k) | \{X_t\}_{t\in[0,T]} \;  \sim 
\mathcal{N}_2\left((X_k,\overline{X}_k),\Sigma\right),
$$
where $\Sigma$ is a $2 \times 2$ positive definite and symmetric matrix, and $\mathcal{N}_2(\mu, \Sigma)$ denotes the bivariate Gaussian distribution with mean vector $\mu$ and covariance matrix $\Sigma$.

In the BMG model, the parameters to be estimated are $b$ and $\sigma$, which are assigned independent Gamma priors. A data set with $T$ observations is simulated with $b = 1$ and $\sigma = 0.5$. Specifically, we adopt the following prior distributions: $b \sim \mathcal{G}a(a_b, b_b)$ and $\sigma \sim \mathcal{G}a(a_\sigma, b_\sigma)$. 
In the simulated data case, we generated $T=200$ observations from the dynamics model and assigned priors of $\mathcal{G}a(1,0.5)$ (Gamma distribution of shape 0.5 and scale 1) for $\sigma$ and $\mathcal{G}a(1,1)$ for $b$.
For the real data, consisting of weekly log-returns from the S\&P 500 index, obtained from Yahoo Finance, covering the period from January 1, 2021, to November 1, 2023, the priors were $\mathcal{G}a(0.1, 0.1)$ for both parameters. 

In the BMIG model, the parameters to be inferred are the elements of the covariance matrix $\Sigma$. The observation data to be assimilated are finance data, specifically weekly hedge return fund data. The prior distributions used in numerical simulations are the inverse Wishart distribution $ \Sigma \sim \mathcal{IW}(\Psi, \nu)$ where
    \begin{itemize}
        \item $\Psi = \begin{pmatrix} 1 & 0 \\ 0 & 1 \end{pmatrix}$ is the scale matrix (identity matrix),
        \item $\nu = 3$ is the degrees of freedom.
    \end{itemize}
    
The density function of the Inverse Wishart distribution is given by:
    $$
        p(\Sigma) = \frac{|\Psi|^{\nu/2}}{2^{\nu p/2} \Gamma_p(\nu/2)} |\Sigma|^{-(\nu + p + 1)/2} \exp\left(-\frac{1}{2} \text{tr}(\Psi \Sigma^{-1})\right),
    $$
where $\Gamma_p(\cdot)$ is the multivariate Gamma function.

\subsection{Simulation Settings}

The simulation experiment is structured as follows. Our multilevel method employs a range of stick numbers, specifically $m \in \{5, 10, 15, 20, 30 \}$. For each stick configuration $m$, we deploy a particle filter with $\mathcal{O}(T)$ particles in the particle MCMC algorithm. In multilevel Monte Carlo simulations, the number of samples at each level is determined through an empirical process. Initially, each level is independently simulated to estimate variances and computational complexities. These estimates are then used in an optimization process to minimize total computational cost while maintaining accuracy. The number of samples is empirically validated, matching the error estimates of multilevel and single-level PMCMC samplers through extensive simulations. In all scenarios, a fixed burn-in period of $10000$ iterations is applied. For the particle filters, resampling is performed adaptively.

\subsection{Simulation Results}

We begin by considering the simulated data.
In Figure \ref{fig:msecost1} we can observe some output from the single level MCMC algorithm run at level $m=30$. We see the state-estimates and the autocorrelation plots from the chain. These indicate rather good mixing for this example, although we note of course that the samples here are not corrected by importance sampling.

In Figure \ref{fig:msecost1} we can see the cost-MSE plots (based upon 50 repeats). They clearly indicate that the multilevel MCMC method has a lower cost to achieve a given MSE. In Table \ref{tab:res} 
we estimate the rates, that is, log cost against log MSE based upon Figure \ref{fig:msecost1}. This suggests that a single level method has a cost of $\mathcal{O}(\epsilon^{-2} |\log(\epsilon)|^2)$ to achieve an MSE of $\mathcal{O}(\epsilon^2)$ at least up-to log-factors. For the multilevel estimators, we obtain the optimal rates. This is because Table  \ref{tab:res} says that the cost is $\mathcal{O}(\epsilon^{-2})$ to achieve an MSE of $\mathcal{O}(\epsilon^2)$; again up-to logarithmic factors. 
Figure \ref{fig:msecost1} and the second row of Table \ref{tab:res}  confirm similar results for the case of simulated data, in this real data setting.

\begin{figure}[H]
\centering
\subfloat[]{\includegraphics[width=0.45\textwidth]{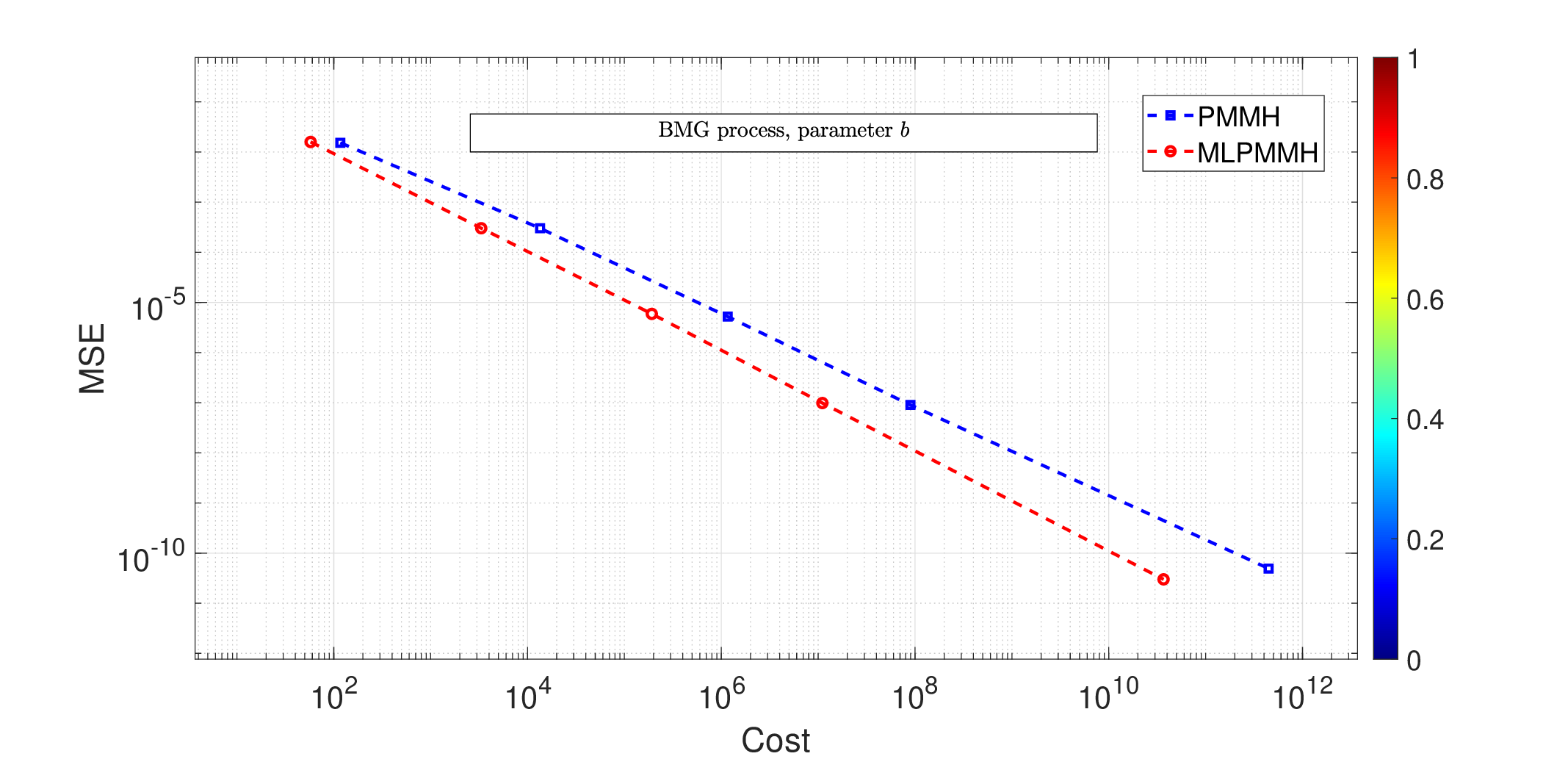}}\qquad
\subfloat[]{\includegraphics[width=0.45\textwidth]{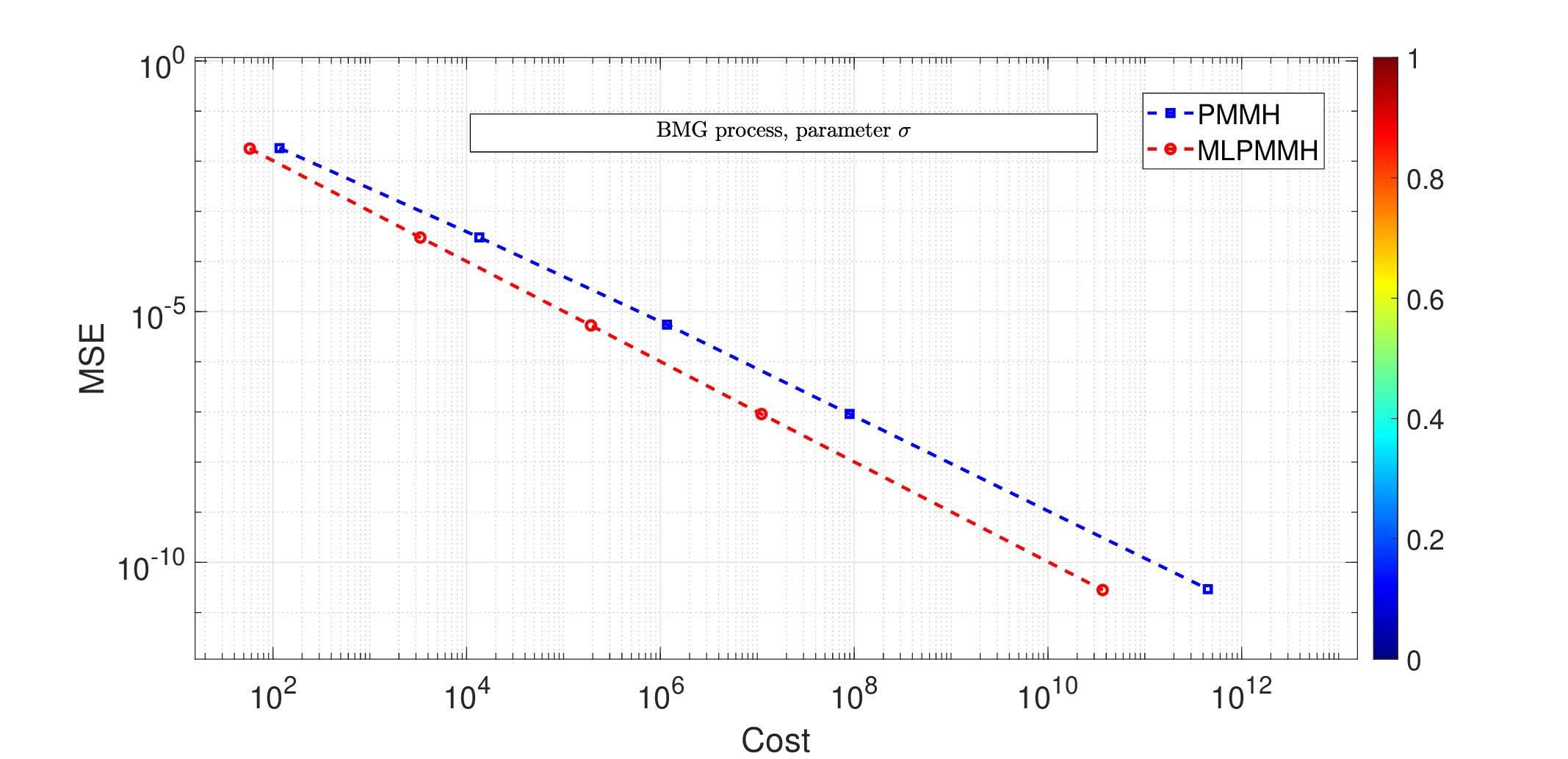}}\qquad
\subfloat[]{\includegraphics[width=0.45\textwidth]{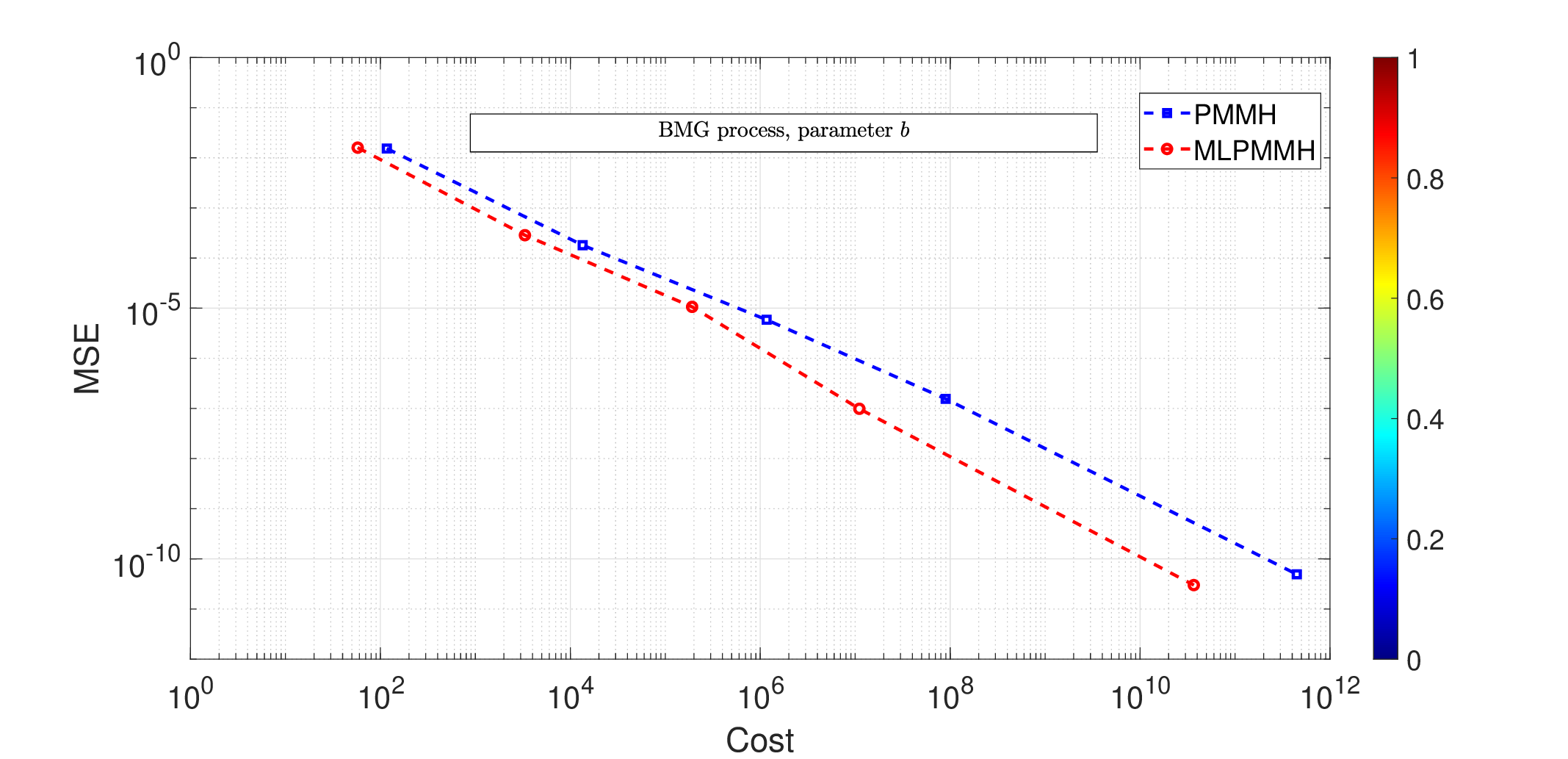}}\qquad
\subfloat[]{\includegraphics[width=0.45\textwidth]{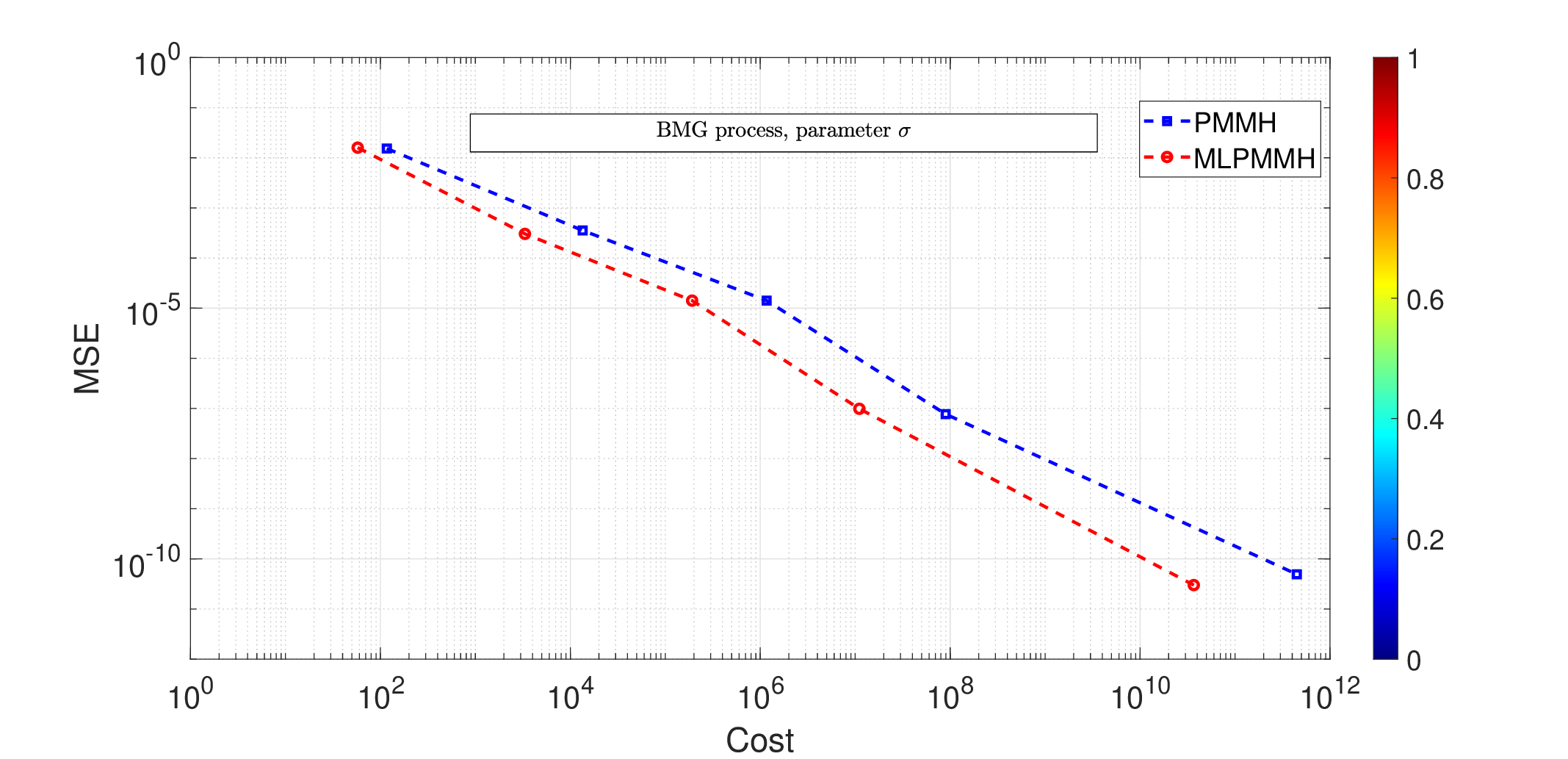}}
\caption{Comparison of cost versus mean squared error (MSE) plots for the BMG and BMIG model parameters using simulated and real data. (a) Parameter $b$ and (b) parameter $\sigma$ for simulated data with BMG process. (c) Parameter $b$ and (d) parameter $\sigma$ for real data with BMIG process.}
\label{fig:msecost1}
\end{figure}

\begin{table}[h]
\begin{center}
\begin{tabular}{ ||c|| c|| c|| c|| } 
\hline \hline
Data  & Parameter & PMCMC & MLPMCMC \\
\hline \hline
Synthetic & $b$ & -1.126 & -1.011 \\
 & $\sigma$ & -1.115 & -1.023 \\
 \hline
Real & $b$ & -1.131 & -1.018 \\
 & $\sigma$ & -1.112 & -1.021 \\
 \hline \hline
\end{tabular}
\caption{Estimated log cost versus the log of the MSE based upon the results from Figure \ref{fig:msecost1}.}
\label{tab:res}
\end{center}
\end{table}

We proceed by conducting experiments using real hedge fund return data, focusing on estimating the parameters of interest within the covariance matrix $\Sigma$. Figures \ref{fig:Inf1} - \ref{fig:Inf3}  show the performance of the PMCMC (again single-level PMCMC at level $m=30$) and is again very reasonable in terms of performance.

\begin{figure}[H]
\centering
\subfloat[]{\includegraphics[width=0.65\textwidth]{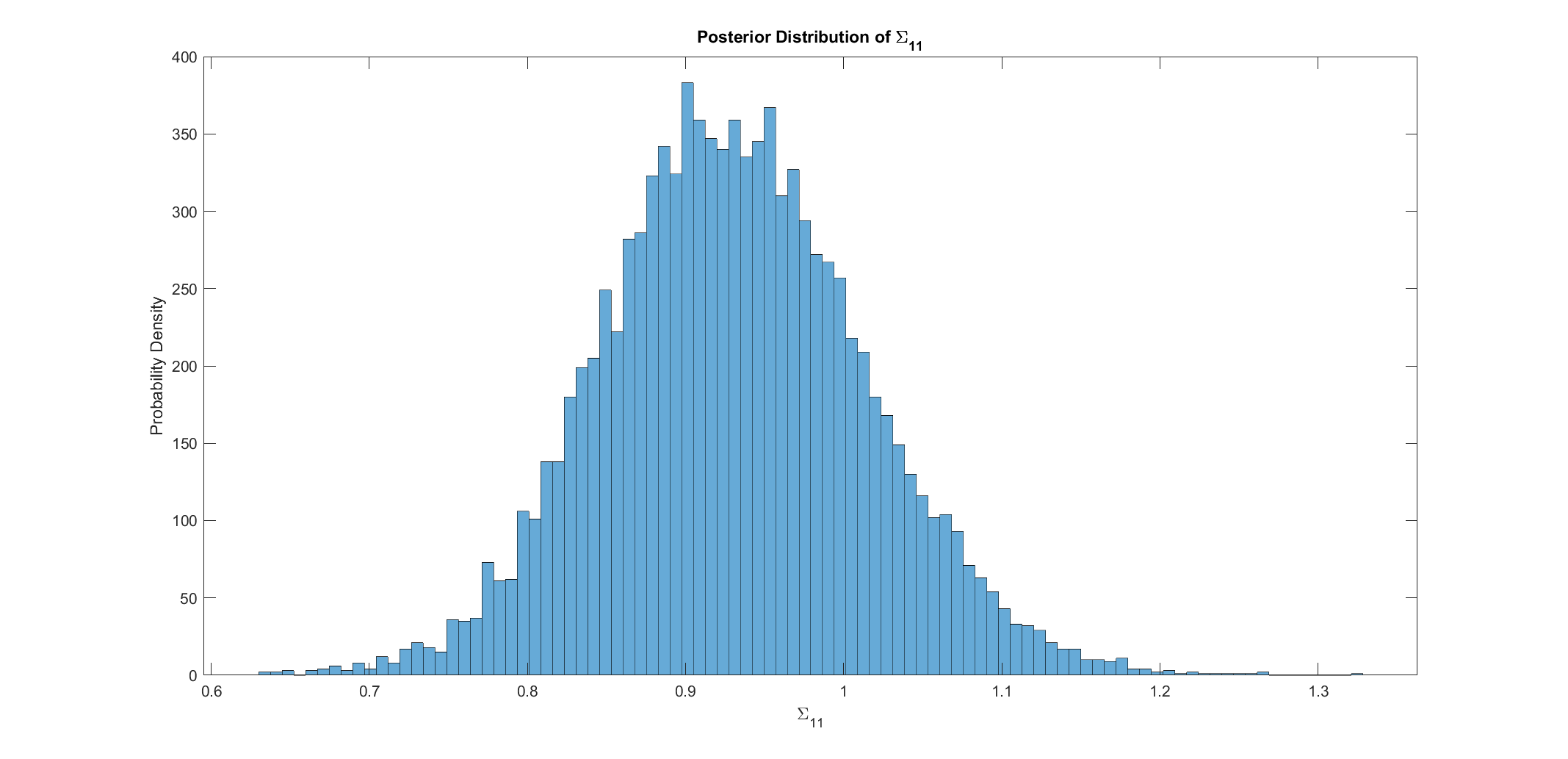}}\qquad
\subfloat[]{\includegraphics[width=0.65\textwidth]{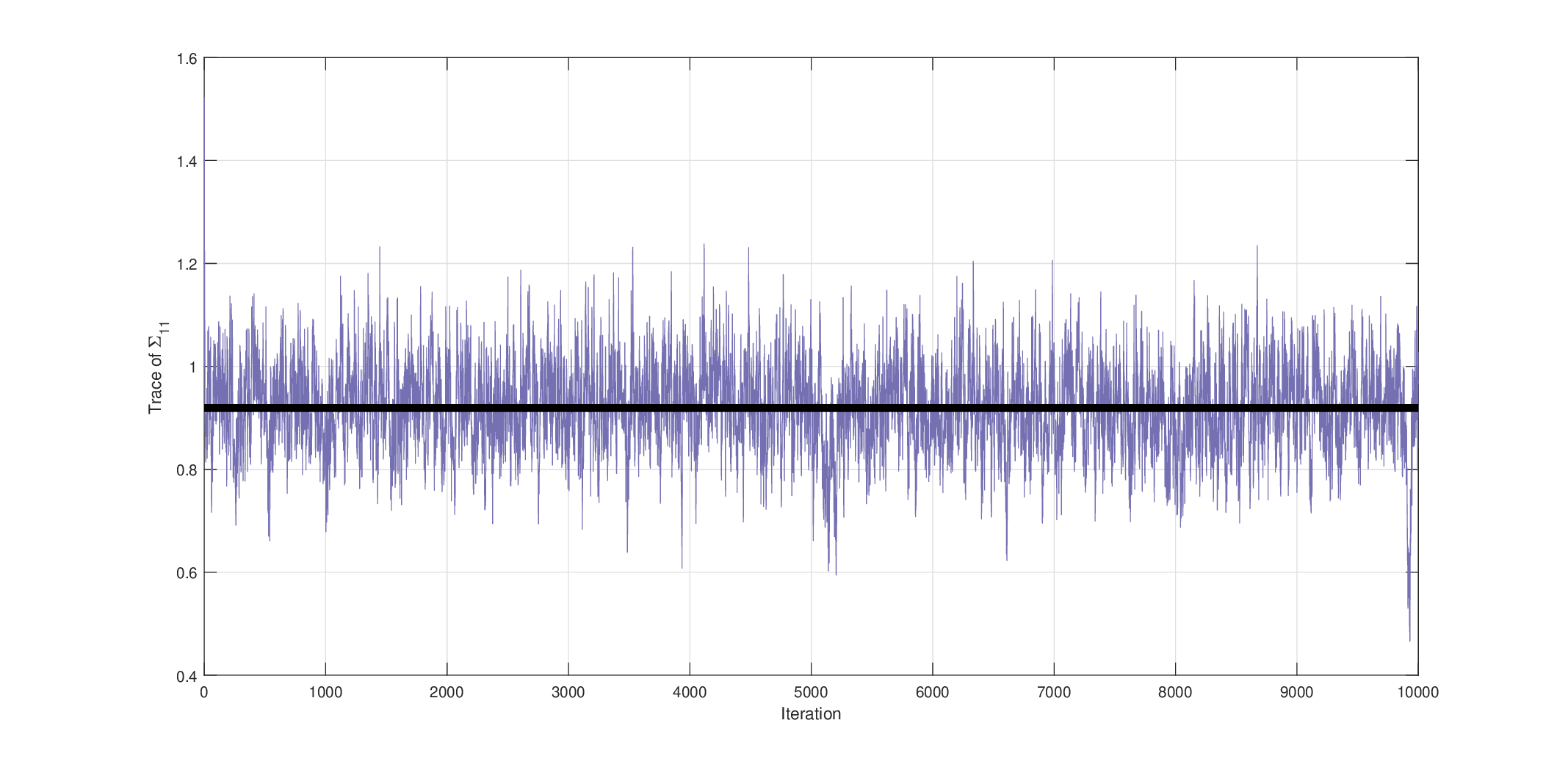}}\qquad
\subfloat[]{\includegraphics[width=0.65\textwidth]{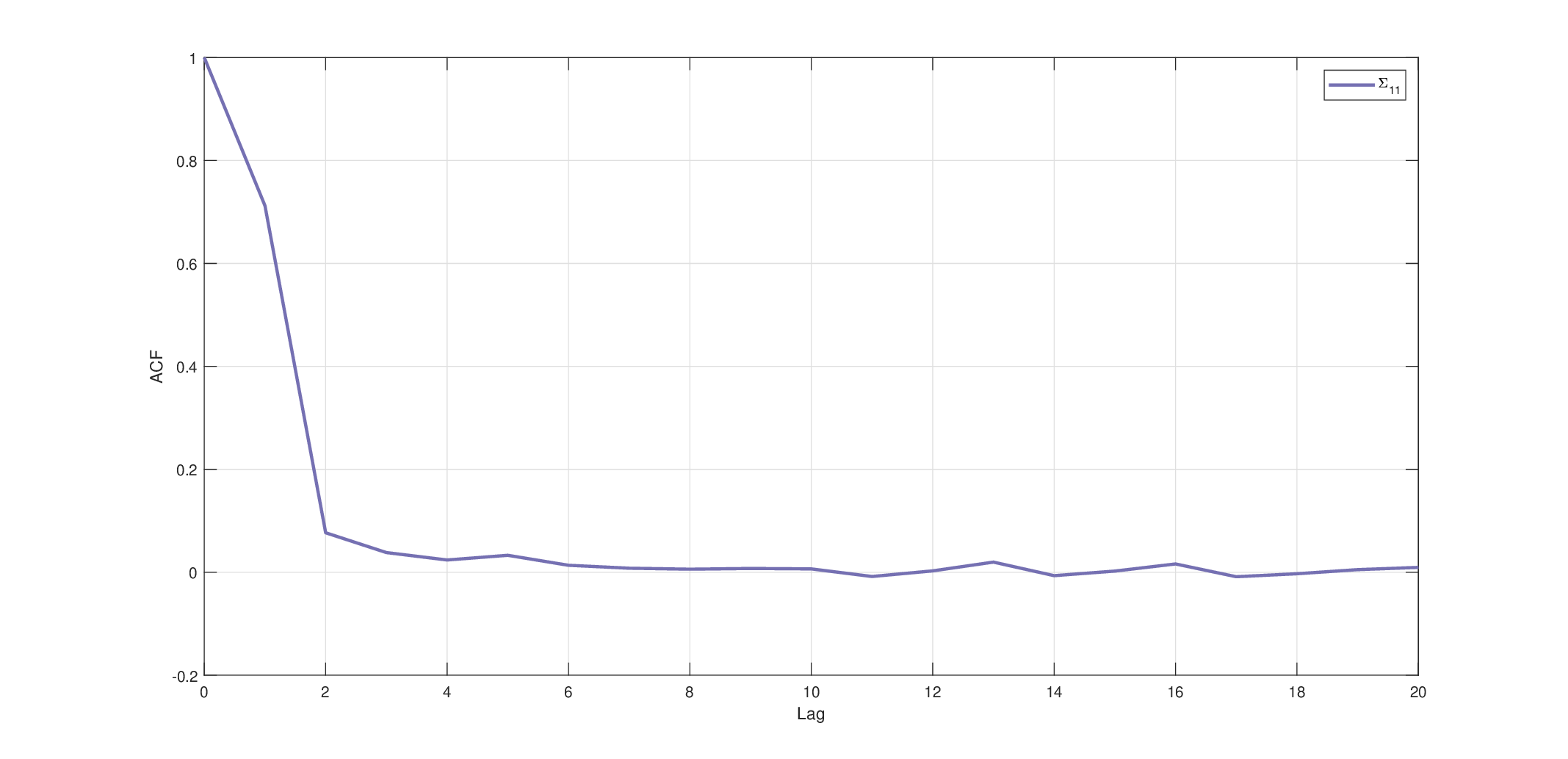}}\qquad
\caption{Bayesian Inference of Parameter $\Sigma_{11}$ in the BMIG Model with Hedge Fund Return Data. (a) The posterior distribution of the covariance matrix element $\Sigma_{11}$, derived from Bayesian Model Inference via hedge fund return data assimilation, illustrates the uncertainty and distributional characteristics of this parameter. (b) PMCMC trace plots for $\Sigma_{11}$ demonstrate the convergence behavior and sampling adequacy of the Markov Chain Monte Carlo algorithm, indicating the robustness of the posterior estimates. (c) Autocorrelation function of a representative PMCMC chain for $\Sigma_{11}$, revealing the degree of temporal correlation between samples and assessing the mixing efficiency of the chain.}
\label{fig:Inf1}
\end{figure}

\begin{figure}[H]
\centering
\subfloat[]{\includegraphics[width=0.75\textwidth]{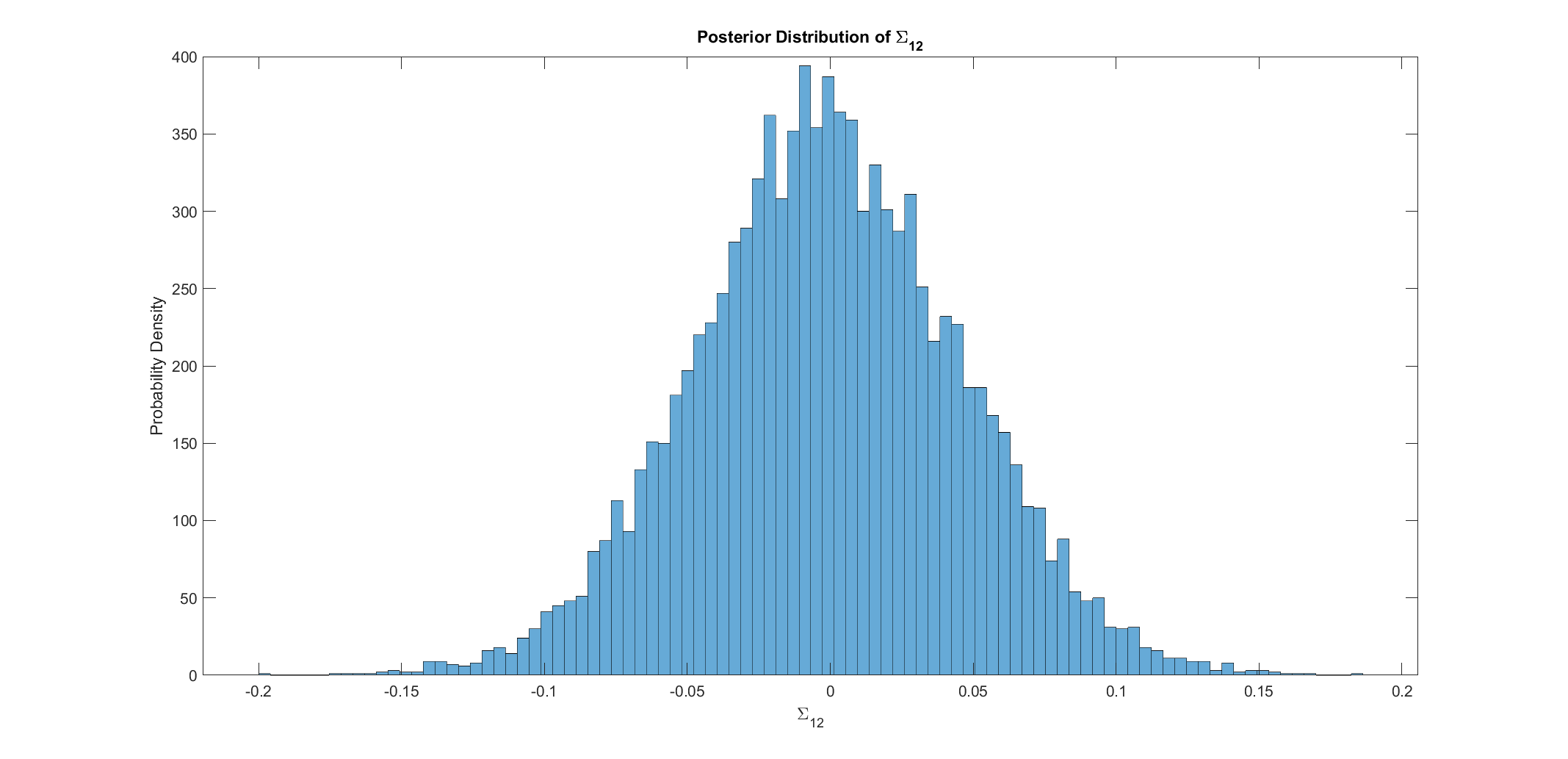}}\qquad
\subfloat[]{\includegraphics[width=0.75\textwidth]{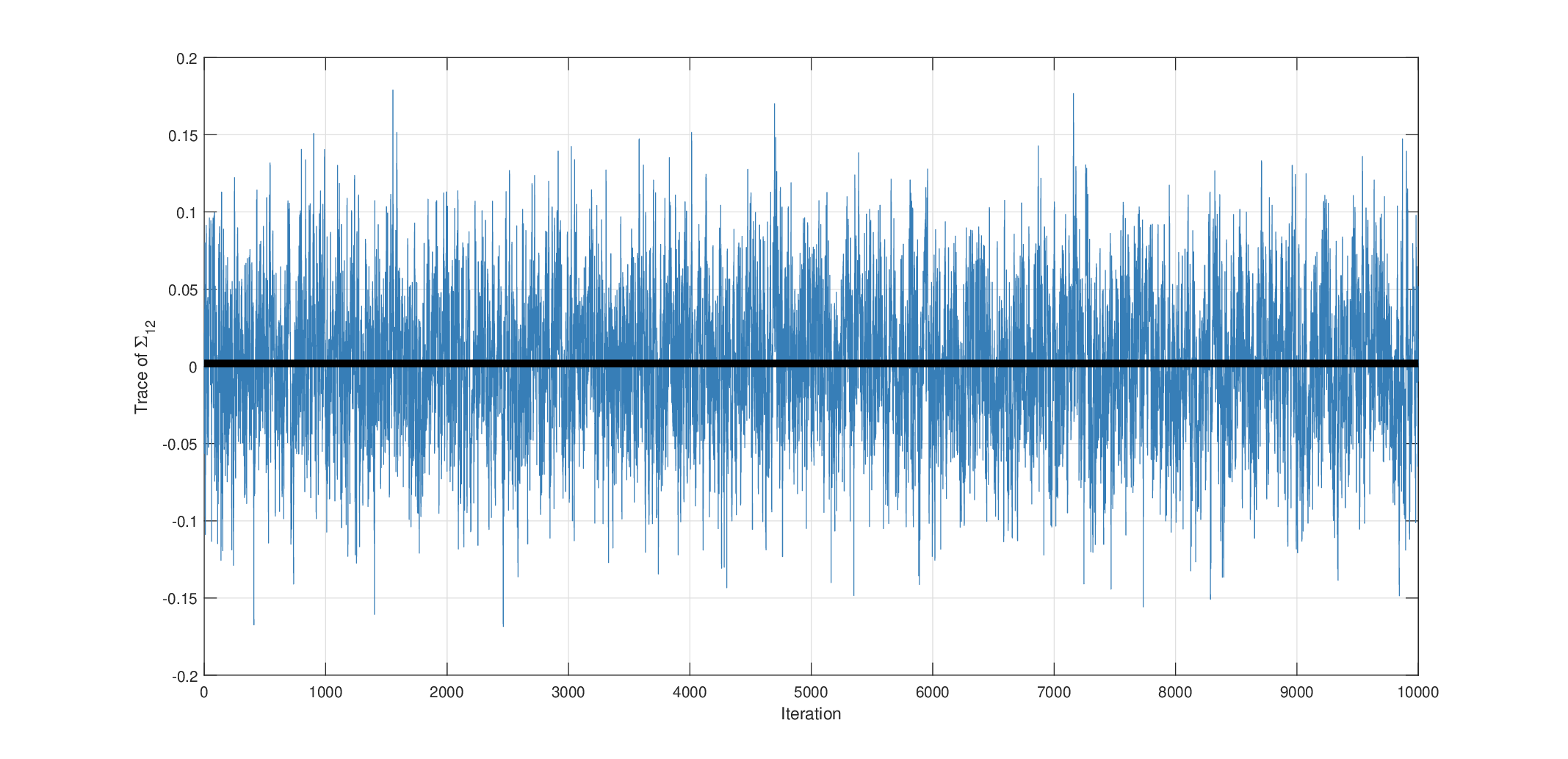}}\qquad
\subfloat[]{\includegraphics[width=0.75\textwidth]{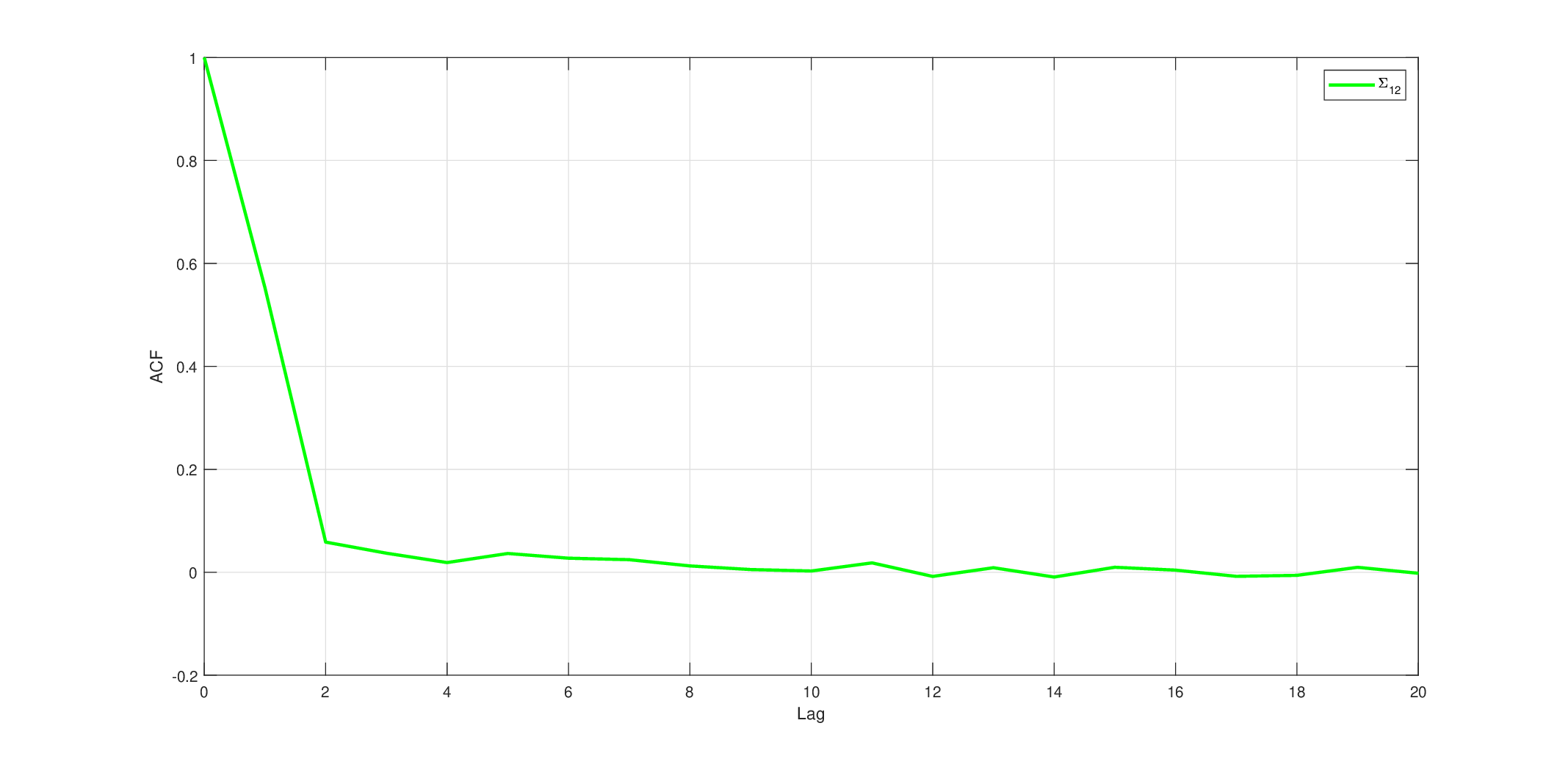}}\qquad
\caption{Bayesian Inference of Parameter $\Sigma_{12}$ in the BMIG Model with Hedge Fund Return Data. (a) The posterior distribution of the covariance matrix element $\Sigma_{12}$. (b) PMCMC trace plots for $\Sigma_{12}$. (c) Autocorrelation function of a representative PMCMC chain for $\Sigma_{12}$.}
\label{fig:Inf2}
\end{figure}

\begin{figure}[H]
\centering
\subfloat[]{\includegraphics[width=0.75\textwidth]{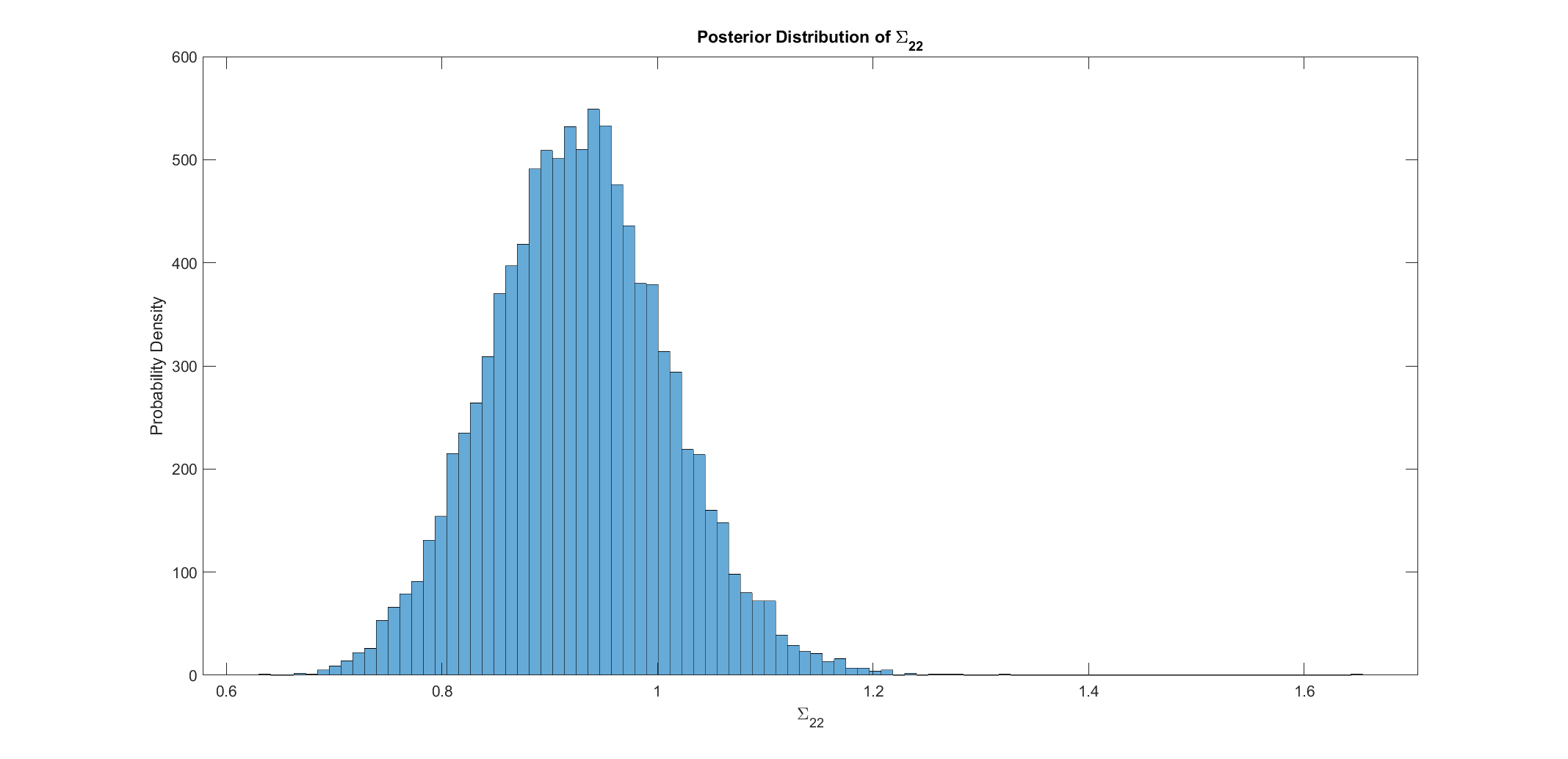}}\qquad
\subfloat[]{\includegraphics[width=0.75\textwidth]{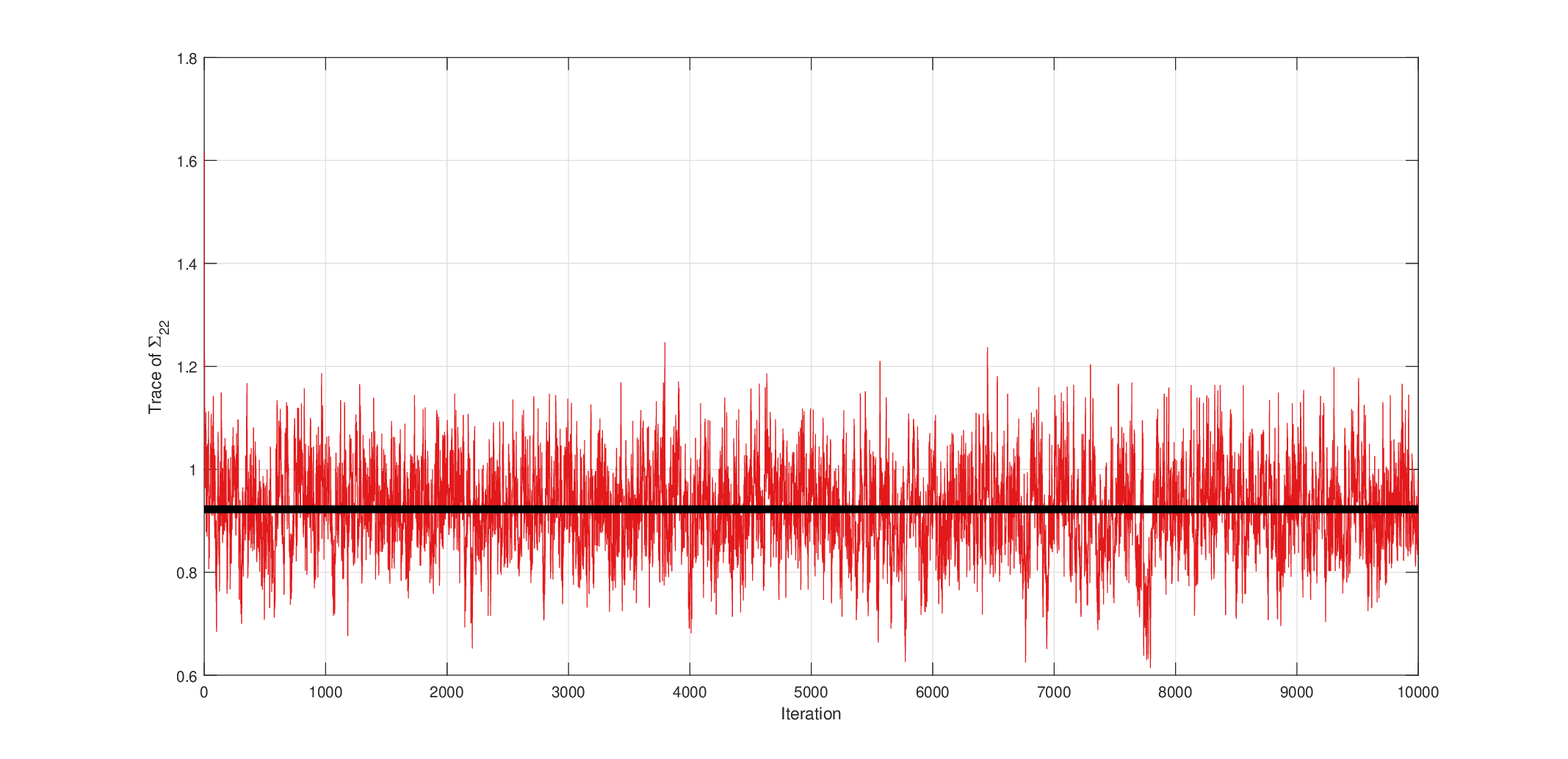}}\qquad
\subfloat[]{\includegraphics[width=0.75\textwidth]{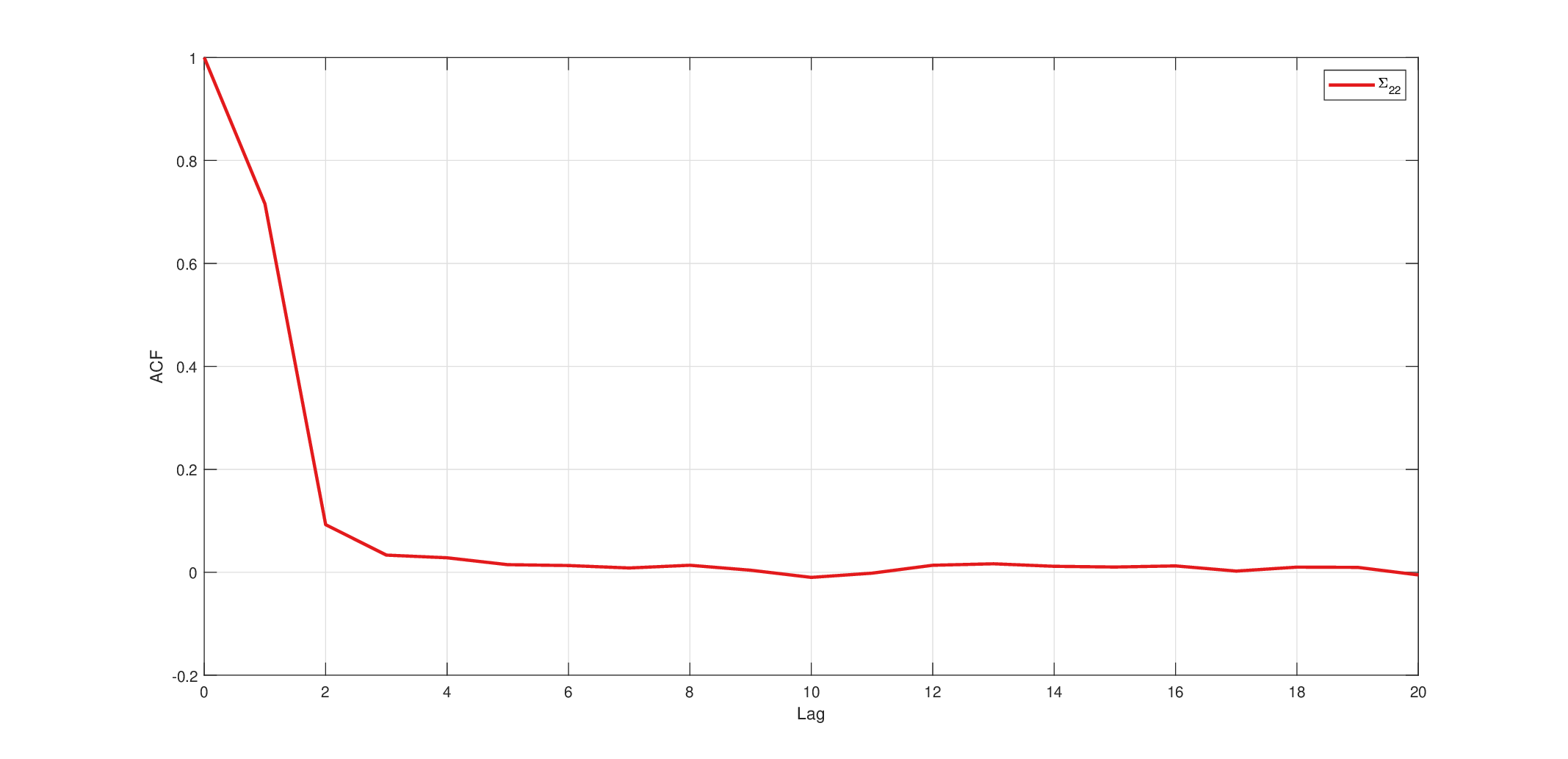}}\qquad
\caption{Bayesian Inference of Parameter $\Sigma_{22}$ in the BMIG Model with Hedge Fund Return Data. (a) The posterior distribution of the covariance matrix element $\Sigma_{22}$. (b) PMCMC trace plots for $\Sigma_{22}$. (c) Autocorrelation function of a representative PMCMC chain for $\Sigma_{22}$.}
\label{fig:Inf3}
\end{figure}

\appendix

\section{Assumption}

\begin{hypA}\label{ass:1}
\begin{enumerate}
\item{All the Markov chains are started in stationarity are reversible with respect to their invariant measure and
uniformly ergodic.}
\item{For any $z$ there exists $0<C_1<C_2<+\infty$ such that for any $u=(x,\overline{x}),\theta$
$C_1\leq g_{\theta}(z|u)\leq C_2$.  For every $z$ there exists a $C<+\infty$
such that for every $(\theta,\theta',u,u')$, we have $|g_{\theta}(z|u)-g_{\theta'}(z|u')|\leq C\|(\theta,u)-(\theta',u')\|_2$.
}
\end{enumerate}
\end{hypA}

This type of assumption has been discussed in detail in \cite{bayes_mlmc}.

\end{document}